\newtheorem{observation}[theorem]{Observation}
\newtheorem{proposition}[theorem]{Proposition}
\crefname{proposition}{Proposition}{Propositions}
\crefname{observation}{Observation}{Observations} %
\newcommand{\proofparagraph}[1]{\par\addvspace\topsep\noindent\emph{#1}\;}
\def\id{\operatorname{id}}
\def\diff{\operatorname{diff}}
\definecolor{defblue}{rgb}{0.121,0.47,0.705}
\definecolor{linkblue}{rgb}{0.098,0.098,0.4392}
\let\emph\relax
\DeclareTextFontCommand{\emph}{\color{defblue}\em}
\newcommand{\tangleMinimization}{\textsc{Tangle-Height Minimization}\xspace}
\newcommand{\listFeasibility}{\textsc{List-Feasibility}\xspace}
\newcommand{\oh}{\ensuremath{\mathcal{O}}}
\begin{document}

\HeadingAuthor{O.~Firman, P.~Kindermann, B.~Klemz,
  A.~Ravsky, A.~Wolff, and J.~Zink}
\HeadingTitle{Deciding the Feasibility and Minimizing the
  Height of Tangles} %
\title{\renewcommand*{\thefootnote}{\fnsymbol{footnote}} %
  Deciding the Feasibility and \mbox{Minimizing the
  Height of Tangles}\footnote{Preliminary
    versions of this work have appeared in Proc.\ 27th
    Int.\ Symp.\ Graph Drawing \& Network
    Vis.\ (GD 2019) \cite{fkrwz-cohtf-GD19} and Proc. 49th
    Int.\ Conf.\ Current Trends Theory \& Practice
    Comput.\ Sci.\ (SOFSEM 2023) \cite{fkkrwz-cotf-SOFSEM23}}}

\authorOrcid[1]{Oksana~Firman}{oksana.firman@uni-wuerzburg.de}{0000-0002-9450-7640} %
\authorOrcid[2]{Philipp~Kindermann}{kindermann@uni-trier.de}{0000-0001-5764-7719} %
\authorOrcid[1]{Boris Klemz}{boris.klemz@uni-wuerzburg.de}{0000-0002-4532-3765} %
\author[3]{Alexander~Ravsky}{alexander.ravsky@uni-wuerzburg.de} %
\authorOrcid[1]{Alexander~Wolff}{}{0000-0001-5872-718X} %
\authorOrcid[1]{Johannes~Zink}{johannes.zink@uni-wuerzburg.de}{0000-0002-7398-718X} %

\affiliation[1]{Universität Würzburg, Würzburg, Germany} %
\affiliation[2]{Universität Trier, Trier, Germany}
\affiliation[3]{Pidstryhach Institute for Applied Problems of
  Mechanics and Mathematics,\\
  National Academy of Sciences of Ukraine, Lviv, Ukraine}

\maketitle

\begin{abstract}
  We study the following combinatorial problem. Given a set of $n$
  y-monotone \emph{wires}, a \emph{tangle} determines the order of the
  wires on a number of horizontal \emph{layers} such that the orders
  of the wires on any two consecutive layers differ only in swaps of
  neighboring wires.  Given a multiset~$L$ of \emph{swaps} (that is,
  unordered pairs of wires) and an initial order
  of the wires, a tangle \emph{realizes}~$L$ if each pair of wires
  changes its order exactly as many times as specified by~$L$.
  \listFeasibility is the problem of finding a tangle that
  realizes a given list~$L$ if such a tangle exists.
  \tangleMinimization is the problem of finding a
  tangle that realizes a given list and additionally uses the minimum
  number of layers.  \listFeasibility (and therefore
  \tangleMinimization) is NP-hard
  [Yamanaka et al., CCCG 2018].

  We prove that \listFeasibility remains NP-hard if every
  pair of wires swaps only a constant number of times.
  On the positive side, we present an algorithm for
  \tangleMinimization that computes an optimal tangle
  for $n$ wires and a given list~$L$ of swaps in
  $\oh((2|L|/n^2+1)^{n^2/2} \cdot \varphi^n \cdot n)$ time, where
  $\varphi \approx 1.618$ is the golden ratio
  and $|L|$ is the total number of swaps in~$L$.

  From this algorithm, we derive a simpler and faster version to solve
  \listFeasibility.  We also use the algorithm to show that
  \listFeasibility is in NP and fixed-parameter
  tractable with respect to the number of wires.
  For \emph{simple} lists, where every swap occurs at most once, we
  show how to solve \tangleMinimization in
  $\oh(n!\varphi^n)$ time.
\end{abstract}

\section{Introduction}\label{sec:intro}

This paper concerns the visualization of
\emph{chaotic attractors}, which occur in (chao\-tic) dynamic systems.
Such systems are considered in physics, celestial mechanics,
electronics, fractals theory, chemistry, biology, genetics, and
population dynamics; see, for instance,~\cite{cfps-c-SA86},~\cite{s-dgpd-SA02}, and~\cite[p.\ 191]{k-oosose-93}.
Birman and Williams~\cite{bw-kpods-Topology80}
were the first to mention tangles as a way to describe the topological
structure of chaotic attractors.  They investigated how the orbits of
attractors are knotted.  Later Mindlin et al.~\cite{mhsgt-csai-PRL90}
characterized attractors using integer matrices that
contain numbers of swaps between the orbits.

\renewcommand{\floatpagefraction}{.8}

Olszewski et al.~\cite{omkbrdb-vtca-GD18} studied the problem of
visualizing chaotic attractors.
In the framework of their paper,
one is given a set of $n$ wires that hang off a horizontal
line in a fixed order, and a multiset of swaps between the wires;
a tangle then is a visualization of these swaps, i.e., a sequence of
horizontal layers with an order of wires on each of the layers
such that the swaps are performed in the way that only adjacent wires can be swapped
and disjoint swaps can be done
simultaneously.
For examples of lists of swaps (described by a multilist and by an $(n \times n)$-matrix)
and tangles that realize these lists, see
\cref{fig:simple_ex,fig:many_loops}.

Olszewski et al.\ gave an
exponential-time algorithm for minimizing the \emph{height} of a
tangle, that is, the number of layers. We call this problem
\tangleMinimization. They tested their algorithm on
a benchmark set.

In an independent line of research, Yamanaka et
al.~\cite{yhuw-llr-CCCG18} showed that the problem
\textsc{Ladder-Lottery Realization} is NP-hard.  As it turns out, this
problem is equivalent to deciding the feasibility of a list, i.e.,
deciding whether there exists a tangle realizing the list.
We call this problem \listFeasibility.

Sado and Igarashi~\cite{si-fectbs-TCS87} used the same objective
function for \emph{simple} lists, that is, lists where each swap
appears at most once. (In their setting, instead of a list, the
start and final permutations are given but this
uniquely defines a simple list of swaps.)
They used odd-even sort, a parallel variant of bubblesort,
to compute tangles
with at most one layer more than the minimum.
Their algorithm runs in quadratic time.
Wang~\cite{w-nrsic-DAC91} showed that there is always a height-optimal
tangle where no swap occurs more than once. Bereg et
al.~\cite{bhnp-rpfm-SIAMJDM16,bhnp-dpfc-GD13} considered a similar
problem.  Given a final permutation, they showed how to minimize the
number of bends or \emph{moves} (which are maximal ``diagonal''
segments of the wires).

Let $L^1=(l^1_{ij})$ denote the (simple) list with $l^1_{ij}=1$
if $i\ne j$, and $l^1_{ij}=0$ otherwise.  This list is feasible
even if we start with any permutation of $\{1, \dots, n\}$;
a tangle realizing $L^1$ is commonly known as
\emph{pseudo-line arrangement}.  So tangles can be thought of as
generalizations of pseudo-line arrangements where the numbers of swaps
are prescribed and even feasibility becomes a difficult question.

In this paper we give new, faster algorithms for \listFeasibility and
\tangleMinimization, but before we can present our contribution in
detail, we need some notation.

\paragraph{Framework, Terminology, and Notation.}\label{FTN}

A \emph{permutation} is a bijection of the set $[n]=\{1,\dots,n\}$
onto itself.  The set $S_n$ of
all permutations of the set $[n]$ is a group whose multiplication is a composition of maps (i.e.,
$(\pi\sigma)(i)=\pi(\sigma(i))$ for each pair of permutations $\pi,\sigma\in S_n$ and each $i\in [n]$).
The identity of the group~$S_n$ is the identity permutation~$\id_n=\langle1,2,\dots,n\rangle$.
We write a permutation $\pi\in S_n$ as the sequence
of numbers $\pi^{-1}(1)\pi^{-1}(2)\dots\pi^{-1}(n)$.
In this sequence, element $i\in [n]$ is placed at the position~$\pi(i)$.
For instance, the permutation $\pi \in S_4$ with $\pi(1)=3$,
$\pi(2)=4$, $\pi(3)=2$, and $\pi(4)=1$ is written as $4312$.

Two permutations $\pi$ and $\sigma$ of $S_n$ are \emph{adjacent} if
they differ only in transposing neighboring elements, that is, if, for
every $i\in[n]$, $|\pi(i)-\sigma(i)|\le 1$. Then the set
$\{i\in [n]:\pi(i)\ne\sigma(i)\}$ splits into pairs $\{i,j\}$ such
that $\pi(i)=\sigma(j)$ and $\pi(j)=\sigma(i)$.

For $i,j \in [n]$ with $i \neq j$, applying the \emph{swap} $(i,j)$ to
a permutation~$\pi$ yields an adjacent permutation~$\sigma$ such that
$\pi(i)=\sigma(j)$ and $\pi(j)=\sigma(i)$.  For two adjacent
permutations~$\pi$ and~$\sigma$, let
\[\diff(\pi,\sigma)= \big\{ (i,j) \mid i,j \in[n] \; \land \;
i \ne j \; \land \; \pi(i)=\sigma(j) \; \land \; \pi(j)=\sigma(i) \big\}\]
be the set of
swaps in which~$\pi$ and~$\sigma$ differ. Given
a set of y-monotone curves called \emph{wires} that hang off a
horizontal line in a prescribed order~$\pi_1$, we define
a \emph{list} $L=(l_{ij})$ of \emph{order} $n$ to be a
symmetric $n\times n$ matrix with
entries in $\mathbb{N}_0$ and zero
diagonal. A list $L=(l_{ij})$ can also be considered
as a multiset of swaps,
where $l_{ij}$ is the multiplicity of swap~$(i,j)$.
By $(i,j) \in L$ we mean $l_{ij}>0$.

A \emph{tangle} $T$ of \emph{height}~$h$ realizing~$L$ is a sequence $\langle \pi_1,\pi_2,\dots,\pi_h \rangle$ of
permutations of the wires such that (i)~consecutive permutations are
adjacent and (ii)~$L=\bigcup_{i=1}^{h-1} \diff(\pi_i,\pi_{i+1})$.
Recall that $L$ is a multiset, so the union in~(ii) can yield several
copies of the same swap.
A \emph{subtangle} of~$T$ is a non-empty sequence
$\langle\pi_p, \pi_{p+1},\dots,\pi_q\rangle$
of consecutive permutations of~$T$ (that is, $1 \le p \le q \le h$).

\begin{figure}[thb]	
	~\hfill\hfill
	\includegraphics[page=1]{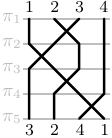}
	\hfill
	\includegraphics[page=2]{simple-example2}
	\hfill\hfill~
	
	\caption{Tangles $T$ and $T'$ of different heights realizing the list
		$L=\{(1,2), (1,3), (1,4), (2,3)\}$.}
	\label{fig:simple_ex}
\end{figure}

For example, the list~$L$ in \cref{fig:simple_ex} admits a tangle
realizing it.  We call such a list \emph{feasible}.
The list $L' = L \,\cup\, \{(1,2)\}$ with two $(1,2)$ swaps,
in contrast, is not feasible.  If the start permutation~$\pi_1$ is not
given explicitly, we assume that $\pi_1=\id_n$.  In
\cref{fig:many_loops}, the list $L_n$ is feasible; it is specified
by an $(n \times n)$-matrix.  The gray horizontal bars correspond to
the permutations (or \emph{layers}).

As a warm-up, we characterize the tangles that realize~$L_n$; this
characterization will be useful in \cref{sec:complexity}.

\begin{observation}
	\label{obs:unique}
	The tangle in \cref{fig:many_loops} realizes the list~$L_n$
	specified there; all tangles that realize~$L_n$ have the same
	order of swaps along each wire.
\end{observation}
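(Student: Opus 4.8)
The plan is to establish the two claims in turn. The first claim --- that the tangle in \cref{fig:many_loops} realizes~$L_n$ --- is purely mechanical: reading the depicted tangle layer by layer, one records for each transition which pair of neighbouring wires is swapped, and then checks that for every pair $\{i,j\}$ the total number of recorded $(i,j)$-swaps equals the corresponding entry of the matrix~$L_n$ shown in the figure; adjacency of consecutive layers is evident from the picture. So essentially all the work lies in the second claim.

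For the second claim, fix an arbitrary tangle $T=\langle\pi_1,\dots,\pi_h\rangle$ realizing~$L_n$, so that $\pi_1=\id_n$. For a wire~$w$, call the subsequence of swaps of~$T$ containing~$w$, listed in the order of the layers performing them, the \emph{trace} of~$w$; we must show that, for every wire, this trace agrees with the one read off from \cref{fig:many_loops}. I would use two elementary invariants of any tangle. \emph{Parity:} at each layer $\pi_t$ the relative order of two wires $i$ and~$j$ depends only on how many $(i,j)$-swaps have been performed before layer~$t$ (even: as in $\pi_1$; odd: reversed). \emph{Rigidity:} if $l_{ij}=0$ then $i$ and~$j$ keep their initial relative order throughout~$T$, so no layer transition can move one past the other.

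With these in hand, I would run a forcing argument that fixes the traces wire by wire. Since $\pi_1=\id_n$, every wire starts at its index position, which forces the first swap of each wire that sits at an end of~$\pi_1$. Processing the wires in the order dictated by the shape of~$L_n$ and using the already-fixed traces of the earlier wires, I would show inductively that the current wire~$w$ has no choice at any step: after each of its swaps the parity and rigidity invariants pin down the two wires currently flanking~$w$, and the multiplicity budget ``the pair $\{w,v\}$ is swapped at most $l_{wv}$ times'' then leaves exactly one admissible partner for the next swap of~$w$. Hence the trace of~$w$ is forced, and the claim follows.

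The step I expect to be the main obstacle is this adjacency bookkeeping inside the forcing argument: to conclude that the next swap of~$w$ is forced, one must know precisely which wires occupy the positions immediately left and right of~$w$ at that moment, and this requires controlling the relative order of \emph{all} other wires between two consecutive swaps of~$w$. Marshalling the two invariants together with the exact multiplicities of~$L_n$ to pin this down is the delicate part; the base case and the reassembly of the per-wire statement are routine.
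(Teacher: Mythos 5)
Your setup is sound as far as it goes: the parity invariant and the non-crossing of pairs whose swap budget is zero (or already exhausted) are exactly the right tools, and reducing the statement to ``the sequence of swap partners along each wire is forced'' is the same decomposition the paper uses. The problem is that the step you explicitly defer as ``the main obstacle'' is the entire content of the observation, and your proposal supplies no argument for it. Moreover, the mechanism you describe is too strong to be provable: at a given layer a wire generally \emph{does} have a choice of whether to perform its next swap now or several layers later, so no local bookkeeping of ``the two wires currently flanking $w$'' can single out a unique admissible partner at each moment. What is actually forced is only the \emph{order} of partners along each wire, and establishing that requires a global blocking argument rather than layer-by-layer adjacency control.

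The paper closes exactly this gap with a short separation argument tailored to the structure of $L_n$ (see \cref{fig:many_loops}). Each wire $i \in [n-2]$ swaps twice with exactly one wire $k \in \{n-1,n\}$ (the one matching its parity), while the wire $i-1$ swaps, among $\{n-1,n\}$, only with the \emph{other} one and hence stays on the same side of $k$ throughout any realizing tangle. Since $i$ and $i-1$ swap exactly once, if $i$ spent that swap before completing its two swaps with $k$, it would end up permanently on the far side of $i-1$ and therefore permanently separated from $k$, so the two $(i,k)$ swaps could never be realized. This forces every $i \in [n-2]$ to first swap with all $j \in [n-2]$ with $j>i$, then perform its double swap with $k$, and only then swap with the wires $j'<i$, which pins down the order of swaps along every wire. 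Your plan needs this (or an equivalent) blocking argument to become a proof; without it, the induction you describe has no engine.
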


\begin{proof}
	For $i, j \in [n - 2]$ with $i \neq j$, the wires $i$ and $j$
	swap exactly once, so their order reverses.
	Additionally, each wire $i \in [n - 2]$ swaps twice with
	the wire $k \in \{ n - 1, n\}$ that has the same parity as $i$.
	Observe that wire $i \in [n - 2]$ must first swap with each $j \in [n - 2]$
	with $j > i$, then twice with the correct $k \in \{ n - 1, n\}$,
	say $k = n$, and finally with each $j' \in [n - 2]$ with $j' < i$.
	Otherwise, there is some wire $i \in [n-1]$ that swaps with
	$i - 1$ before swapping with $n$.
	Then, $i$ cannot reach $n$ because wire $i - 1$ swaps only with wire $n - 1$
	among the two wires $\{n -1, n\}$ and thus separates $i$ from $n$.
	This establishes the unique order of swaps along each wire.
\end{proof}

\begin{figure}[tb]
	\begin{minipage}[c]{.48\linewidth}
		\centering
		
		$L_n=
		\begin{pmatrix}
		0 & 1 & 1 & \dots & 1 & \bf 0 & \bf 2 \\
		1 & 0 & 1 & \dots & 1 & \bf 2 & \bf 0 \\
		1 & 1 & 0 & \dots & 1 & \bf 0 & \bf 2 \\[-.8ex]
		\vdots & \vdots & \vdots & \ddots & \vdots & \vdots & \vdots \\
		1 & 1 & 1 & \dots & 0 & 0 & 2 \\
		\bf 0 & \bf 2 & \bf 0 & \dots & 0 & 0 & n-1 \\
		\bf 2 & \bf 0 & \bf 2 & \dots & 2 & n-1 & 0 \\
		\end{pmatrix}$
		
		\bigskip
		
		(The bold zeros and twos must be swapped if $n$ is even.)
	\end{minipage}
	\hfill
	\begin{minipage}[c]{.48\linewidth}
		\centering
		\includegraphics{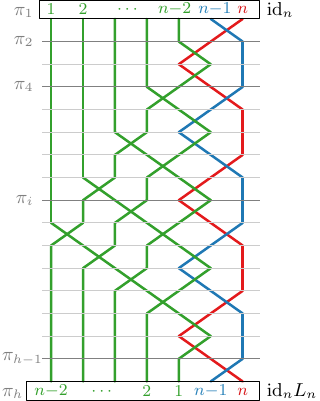}
	\end{minipage}
	
	\caption{A list $L_n$ for $n$ wires and a tangle of height $3n-4$
		realizing~$L_n$ (for $n=7$). The tangle height is minimum. %
	}
	\label{fig:many_loops}
\end{figure}

Let $|L|=\sum_{i<j} l_{ij}$ be the \emph{length} of~$L$.  A
list $L'=(l'_{ij})$ is a \emph{sublist} of $L$ if $l'_{ij}\le l_{ij}$
for each $i,j\in [n]$.
If there is a pair $(i',j') \in [n]^2$ such that $l'_{i'j'}<
l_{i'j'}$, then $L'$ is a \emph{strict} sublist of~$L$.
A list is \emph{simple} if all its entries are zeros or ones.
For any two lists $L = (l_{ij})$ and $L' = (l'_{ij})$
such that, for each $i,j \in [n]$, $l'_{ij}\le l_{ij}$,
let $L-L' = (l_{ij}-l'_{ij})$.

In order to understand the structure of feasible lists better,
we consider the following relation between them.
Let $L=(l_{ij})$ be a feasible list. Let $L'=(l'_{ij})$ be a list with
$l'_{ij}=l_{ij}$ for all $i, j \in [n]$ except for a single pair
$(i',j') \in [n]^2$, where $l'_{i'j'} = l_{i'j'} + 2$.
We claim that, if $l_{i'j'}>0$, then the list $L'$ is feasible, too.
To this end, note that any tangle $T$ that realizes $L$ has
two neighboring layers with adjacent permutations~$\pi$ and~$\pi'$
such that $(i,j) \in \diff(\pi,\pi')$.
Directly after~$\pi$,
we can insert two swaps $(i,j)$ into~$T$.  This yields a tangle
that realizes $L'$.
Given two lists $L=(l_{ij})$ and $L'=(l'_{ij})$, we write $L\to L'$ if
the list $L$ can be \emph{extended} to the list $L'$
by iteratively applying the above operation.

For a list $L=(l_{ij})$, let $1(L)=(l_{ij} \bmod 2)$ and let
$2(L)=(l_{ij}'')$ with $l_{ij}''=0$ if $l_{ij}=0$,
$l_{ij}''=1$ if $l_{ij}$ is odd, and $l_{ij}''=2$ otherwise.
We call $2(L)$ the \emph{type} of~$L$.
Clearly, given two lists $L =(l_{ij})$ and $L' =(l'_{ij})$,
we have that $L\to L'$ if and only if $2(L)=2(L')$ and $l_{ij}\le l'_{ij}$
for each $i,j\in [n]$.

A feasible list $L_0$ is \emph{minimal} if there exists no feasible
list~$L^\star \ne L_0$ such that $L^\star \to L_0$. Thus a list~$L$ is feasible if and only
if there exists a minimal feasible list~$L_0$ of type $2(L)$ such that $L_0\to L$.

\paragraph{Our Contribution.}
As mentioned above, Yamanaka et
al.~\cite{yhuw-llr-CCCG18} showed that \listFeasibility in general is
NP-hard (which means that \tangleMinimization is
NP-hard as well). However, in their reduction, for some swaps the
number of occurrences is linear in the number of wires.
We strengthen their result by
showing that \listFeasibility is NP-hard even if all swaps
have constant multiplicity; see \cref{sec:complexity}.  Our
reduction uses a variant of \textsc{Not-All-Equal 3-SAT} (whereas
Yamanaka et al.\ used \textsc{1-in-3 3SAT}). Moreover we show that
for some types of lists, the problem is efficiently solvable;
see \cref{sec:fpt}.

For \tangleMinimization of simple lists for $n$ wires,
we present an
exact algorithm that is based on breadth-first search (BFS) in an
auxiliary graph and runs in $\oh(n!\varphi^n)$ time, where
$\varphi = (\sqrt{5}+1)/2 \approx 1.618$ is the golden ratio.
Recently, Baumann \cite{b-hmst-BTh20} has shown that the BFS can be
executed on a smaller auxiliary graph, which leads to a runtime of
$\oh(n!\psi^n)$ time, where
$\psi=(\sqrt[3]{9-\sqrt{69}}+\sqrt[3]{9+\sqrt{69}})/\sqrt[3]{18}
\approx 1.325$.
For general lists, we present an exact algorithm that is based on
dynamic programming and
runs in $\oh((2|L|/n^2+1)^{n^2/2}\varphi^n n\log|L|)$ time, where $L$ is the input list; see
\cref{sec:algorithms}.
Note that the runtime is polynomial in $|L|$ for fixed $n\ge 2$.

For \listFeasibility,
we use the dynamic programming algorithm from \cref{sec:algorithms}
and improve it to get an exponential-time algorithm with runtime
$\oh\big((2|L|/n^2+1)^{n^2/2}\cdot n^3 \log |L|\big)$;
see \cref{sec:fpt}.
The above runtimes are expressed in terms of the logarithmic cost model of
computation to show % explicitly
that the runtime depends only very
weakly on the length of~$L$ (actually on its largest entry).

Although we cannot characterize minimal feasible lists, we can bound their entries. Namely, we show
that, in a minimal feasible list of order~$n$, each swap occurs at
most $n^2/4+1$ times;
see Proposition~\ref{prop:minimal-feasible-lists-upper-bound}.
As a corollary, this yields that
\listFeasibility is in NP.
Combined with our exponential-time algorithm, this
also leads to an algorithm for testing
feasibility that is fixed-parameter tractable with respect to $n$.

\section{Complexity}
\label{sec:complexity}

Yamanaka et al.~\cite{yhuw-llr-CCCG18} showed that
\listFeasibility is NP-hard.  In their reduction, however,
some swaps have multiplicity $\Theta(n)$.  In this section, we show
that \listFeasibility is NP-hard even if all swaps have
multiplicity at most~8.  We reduce from \textsc{Positive NAE 3-SAT
	Diff}, a variant of \textsc{Not-All-Equal 3-SAT}.  Recall that in
\textsc{Not-All-Equal 3-SAT} one is given a Boolean formula in
conjunctive normal form with three literals per clause and the task is
to decide whether there exists a variable assignment such that in no
clause all three literals have the same truth value.  By Schaefer's
dichotomy theorem \cite{s-csp-STOC78}, \textsc{Not-All-Equal 3-SAT} is
NP-hard even if no negative (i.e., negated) literals are admitted.
In \textsc{Positive NAE 3-SAT Diff}, additionally each clause contains
three different variables.
We show that this variant is NP-hard, too.

\begin{lemma}
	\label{lem:PosNAEThreeSATDiff}
	\textsc{Positive NAE 3-SAT Diff} is NP-hard.
\end{lemma}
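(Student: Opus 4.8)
The plan is to reduce from \textsc{Positive NAE 3-SAT} (which is NP-hard by Schaefer's dichotomy theorem, as recalled above) to \textsc{Positive NAE 3-SAT Diff}. The only obstacle is getting rid of clauses in which the same variable occurs two or three times, since all literals are already positive. So the task is a gadget-style local replacement that eliminates repeated variables within a clause while preserving satisfiability under the not-all-equal semantics.

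First I would dispose of the degenerate cases. If a clause contains the same variable three times, i.e.\ a clause of the form $(x \vee x \vee x)$, then it can never be not-all-equal-satisfied, so the whole formula is unsatisfiable and we may output a fixed trivial no-instance. If a clause contains a variable exactly twice, say $(x \vee x \vee y)$ with $x \neq y$, then the not-all-equal condition forces $x$ and $y$ to receive different truth values; conversely any assignment with $x \neq y$ satisfies this clause in the NAE sense. So I need a gadget, using only clauses with three \emph{distinct} positive variables, that expresses exactly the constraint ``$x \neq y$.''

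The key step is to build this ``inequality gadget.'' Introduce two fresh variables $a_1, a_2$ (local to this clause) and replace $(x \vee x \vee y)$ by the two clauses $(x \vee y \vee a_1)$ and $(x \vee y \vee a_2)$ together with a mechanism forcing $a_1 \neq a_2$ — but that reintroduces the same problem, so instead I would chain in standard auxiliary variables. A cleaner route: use the well-known fact that in NAE-SAT one can force two literals to be equal or to be unequal via a short chain of all-distinct clauses. Concretely, to force $x \neq y$, introduce fresh $z$ and the clauses $(x \vee y \vee z)$ and $(x \vee y \vee z')$ where $z'$ is forced equal to $\neg z$; but since negations are not allowed, I instead force $z \neq z'$ and $x \neq z$-type relations through a small constant-size block of three-distinct-variable clauses, verifying by a finite case check (four rows of a truth table) that the block's satisfying assignments are precisely those with $x \neq y$. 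The auxiliary variables are new in each gadget, so they do not create new repetitions, and each gadget adds only $O(1)$ clauses and variables, giving a polynomial-time reduction.

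The main obstacle is ensuring the gadget really enforces $x \neq y$ and nothing more — that is, that it does not accidentally constrain $x$ or $y$ beyond inequality (e.g.\ by pinning one of them), and that its fresh variables can always be extended to a satisfying assignment once $x \neq y$ holds. I would handle this by writing the gadget with a symmetric set of clauses over $\{x, y\}$ plus local helpers and then exhaustively checking the (constant-size) truth table, so correctness reduces to a finite verification. Putting the pieces together: the transformed formula is a \textsc{Positive NAE 3-SAT Diff} instance computable in polynomial time, and it is NAE-satisfiable if and only if the original \textsc{Positive NAE 3-SAT} instance is, which establishes NP-hardness. \qed
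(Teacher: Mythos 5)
Your proposal correctly identifies the task (eliminating repeated variables within a clause while preserving NAE-satisfiability) and correctly isolates the crux: a gadget, built only from clauses with three distinct positive variables, that enforces $x\neq y$. But the proposal never actually exhibits that gadget. You first try two fresh variables $a_1,a_2$ with clauses $(x\vee y\vee a_1)$ and $(x\vee y\vee a_2)$ plus ``a mechanism forcing $a_1\neq a_2$,'' correctly observe that this is circular (forcing two positive variables to differ is exactly the problem you are trying to solve), and then fall back on an unspecified ``small constant-size block of three-distinct-variable clauses'' to be ``verified by a finite case check.'' Since constructing that block is the entire content of the lemma, this is a genuine gap, not a deferrable detail: the argument as written does not establish NP-hardness.

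The missing idea in the paper's proof is that you do not need to force any \emph{particular} pair of auxiliary variables to disagree; you only need a set of auxiliaries that is guaranteed to contain both truth values. Introduce three fresh variables $a,b,c$ together with the single clause $(a\vee b\vee c)$; under NAE semantics this clause guarantees that at least one of $a,b,c$ is true and at least one is false. Then replace $(x\vee x\vee y)$ by the three clauses $(x\vee y\vee a)$, $(x\vee y\vee b)$, $(x\vee y\vee c)$. If $x=y$, each of these clauses forces the third variable to equal $\neg x$, so all of $a,b,c$ would be equal, contradicting the clause $(a\vee b\vee c)$; conversely, if $x\neq y$, all three clauses are NAE-satisfied regardless of $a,b,c$. (The paper also handles negated literals by introducing $x_i,y_i$ per variable and the clause $(x_i\vee y_i\vee y_i)$, which is then itself de-duplicated by the same trick; since you start from the positive variant via Schaefer, you can skip that step, and your observation that $(x\vee x\vee x)$ makes the instance trivially unsatisfiable is a correct additional edge case.) With the $a,b,c$ gadget supplied, your overall reduction strategy would go through.
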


\begin{proof}
	We show NP-hardness of \textsc{Positive NAE 3-SAT Diff} by reduction
	from \textsc{Not-All-Equal 3-SAT}. Let $\Phi$
	be an instance of \textsc{Not-All-Equal 3-SAT} with variables
	$v_1, v_2, \dots, v_n$. First we show how to get rid of negative
	variables and then of multiple occurrences of the same variable in a
	clause.
	
	We create an instance $\Phi'$ of \textsc{Positive NAE 3-SAT Diff} as
	follows.  For every variable $v_i$, we introduce two new variables
	$x_i$ and $y_i$.  We replace each occurrence of $v_i$ by $x_i$ and
	each occurrence of $\neg v_i$ by $y_i$.  We need to force $y_i$ to
	be $\neg x_i$. To this end, we introduce the clause $(x_i \vee y_i
	\vee y_i)$.
	
	Now, we introduce three additional variables $a$, $b$, and $c$ that
	form the clause $(a \vee b \vee c)$.  Let $d = (x \vee x \vee y)$ be
	a clause that contains two occurrences of the same variable.  We
	replace $d$ by three clauses $(x \vee y \vee a)$,
	$(x \vee y \vee b)$, $(x \vee y \vee c)$. Since at least one of the
	variables $a$, $b$, or $c$ has to be true and at least one has to be
	false, $x$ and $y$ cannot have the same assignment, i.e.,
	$x = \neg y$.  Hence, $\Phi'$ is satisfiable if and only if $\Phi$
	is.  Clearly, the size of $\Phi'$ is polynomial in the size
	of~$\Phi$.
\end{proof}

\begin{theorem}
	\label{thm:hardness}
	\listFeasibility is NP-complete even if every pair of wires
	has at most eight swaps.
\end{theorem}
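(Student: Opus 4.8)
The plan is to prove membership and hardness separately. For membership, note that when every pair of wires swaps at most eight times we have $|L| \le 8\binom{n}{2} = \oh(n^2)$. Every transition of a height-minimal tangle realizing~$L$ must involve at least one swap (otherwise two consecutive permutations coincide and one of them can be deleted), so such a tangle has height at most $|L|+1 = \oh(n^2)$ and hence is a polynomial-size object. Given it, one verifies in polynomial time that consecutive permutations are adjacent and that the multiset of performed swaps equals~$L$; thus \listFeasibility restricted to bounded-multiplicity instances is in NP. (Alternatively, NP-membership also follows from the general bound on minimal feasible lists in \cref{prop:minimal-feasible-lists-upper-bound}.)

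For hardness I would reduce from \textsc{Positive NAE 3-SAT Diff}, which is NP-hard by \cref{lem:PosNAEThreeSATDiff}. Given a formula $\Phi$ with variables $v_1,\dots,v_N$ and clauses $C_1,\dots,C_M$ (each a disjunction of three pairwise distinct, un-negated variables), I would build a list $L_\Phi$ over wires split into a rigid \emph{scaffold}, one \emph{variable gadget} per variable, and one \emph{clause gadget} per clause, choosing all multiplicities to be at most eight. The scaffold plays the role of wires $n-1$ and $n$ in \cref{obs:unique}: using only multiplicities $1$ and $2$ it forces the coarse shape of any realizing tangle, carving out fixed ``corridors'' in which the gadget wires must move, and in particular providing, for each variable, two alternative corridors that encode the values \textsc{true} and \textsc{false}. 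The variable gadget for $v_i$ is a wire (or short bundle of wires) whose prescribed swaps with the scaffold force it to commit to exactly one of these two corridors --- this is precisely the blocking phenomenon from the proof of \cref{obs:unique}, where a wire that performs its swaps in the wrong order gets permanently separated from a wire it still has to reach.

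The clause gadget for a clause on $v_i, v_j, v_k$ then consists of a constant number of extra wires and a ``traveller'' wire that is required by its prescribed scaffold-swaps to cross from one side of the three corresponding variable corridors to the other. I would design the swaps so that the traveller can complete its crossing exactly when at least one of $v_i,v_j,v_k$ sits in its \textsc{true} corridor and at least one sits in its \textsc{false} corridor: three variable wires all in the same corridor jointly block the traveller, just as wire $i-1$ blocks wire $i$ from wire $n$ in \cref{obs:unique}. Distinctness of the three variables in each clause (the reason for reducing from the \emph{Diff} variant) ensures that each variable wire meets each clause gadget only a bounded number of times, so the gadgets can be ``stacked'' along a wire that appears in many clauses without any multiplicity exceeding eight. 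One then verifies both directions: a NAE-assignment of $\Phi$ yields a realizing tangle by routing each variable wire into the corridor matching its value and routing every traveller through one \textsc{true} and one \textsc{false} variable; conversely, any realizing tangle induces a consistent assignment via the corridor each variable wire commits to, and the fact that every traveller got through forces that assignment to be not-all-equal.

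The main obstacle will be the clause gadget: making the traveller's crossing genuinely \emph{equivalent} to the NAE condition --- neither so permissive that an all-equal assignment is wrongly accepted nor so restrictive that a genuine NAE-assignment is wrongly rejected --- while keeping every multiplicity constant and ensuring that clause gadgets sharing a variable do not interfere. Concretely, this requires first proving a quantitative rigidity statement for the scaffold (strengthening \cref{obs:unique}) and then a careful case analysis of which wire movements are possible inside a clause gadget. I expect the soundness direction (realizing tangle $\Rightarrow$ NAE-assignment) to follow quickly once rigidity is established, and the completeness direction (NAE-assignment $\Rightarrow$ realizing tangle) to be an explicit but tedious layer-by-layer construction.
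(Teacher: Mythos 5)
Your NP-membership argument is correct and in fact simpler than the paper's for this restricted setting: with all multiplicities at most eight we have $|L|\le 8\binom{n}{2}$, a feasible list admits a realizing tangle of height at most $|L|+1$ (delete any repeated consecutive layer, which changes nothing since its $\diff$ with its neighbor is empty), and such a tangle is a polynomial-size, polynomially verifiable certificate. The paper instead certifies feasibility via the bound of \cref{prop:minimal-feasible-lists-upper-bound} on minimal feasible lists, which has the advantage of covering lists with arbitrarily large entries; your shortcut is legitimate here because the theorem only claims membership for the bounded-multiplicity problem.

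The hardness part, however, has a genuine gap, and it is exactly the one you flag yourself: the clause gadget is never constructed, and it is the entire content of the reduction. Up to that point your plan coincides with the paper's --- reduce from \textsc{Positive NAE 3-SAT Diff}, build a rigid scaffold in the spirit of \cref{obs:unique}, and force each variable wire to commit to a ``true'' or a ``false'' region. But your proposed mechanism, a ``traveller'' wire that is blocked precisely when all three variable wires of a clause occupy the same corridor, is not obviously realizable with constant multiplicities: a wire that must negotiate three independently placed corridor pairs, possibly shared with many other clauses, tends to need a number of swaps that grows with the number of obstructions, and you give no accounting of why eight suffices. The paper sidesteps the need for such a direct blocking argument by a different mechanism: a single pair $\lambda,\lambda'$ swaps eight times, creating exactly four closed loops, two labelled true and two labelled false; auxiliary $\varphi$-, $\gamma$- and $\psi$-wires force each clause wire into all four loops and force its three variable wires to meet it in three \emph{distinct} loops, so the not-all-equal condition falls out of the pigeonhole principle ($3$ variables versus $2+2$ loops) rather than from a bespoke blocking gadget. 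Without a concrete gadget, a per-pair multiplicity count, and a non-interference argument for clause gadgets sharing a variable (the paper needs extra swaps and a discussion of the order of clause ``arms'' inside a loop for exactly this), what you have is a plausible plan rather than a proof.
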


\begin{proof}
We split our proof, which uses gadgets for variables and clauses, into
several parts.  First, we introduce some
notation, then we give the intuition behind our reduction.  Next, we
present our variable and clause gadgets in more detail.  Finally, we show
the correctness of the reduction.

\proofparagraph{Notation.}
Recall that
we label the wires by their index in the initial permutation of a
tangle.  In particular, for a wire $\varepsilon$, its neighbor to the
right is wire $\varepsilon + 1$.  If a wire~$\mu$ is to the left of
some other wire~$\nu$ in the initial permutation,
then we write $\mu<\nu$.  If all wires in a set $M$
are to the left of all wires in a set~$N$ in the initial
permutation, then we write $M<N$.

\proofparagraph{Setup.}
Given an instance $\Phi = d_1 \wedge \dots \wedge d_m$ of
\textsc{Positive NAE 3-SAT Diff} with variables $w_1, \dots, w_n$, we
construct in polynomial time a list $L$ of swaps such that there is a
tangle $T$ realizing~$L$ if and only if $\Phi$ is a yes-instance.

In $L$, we have two inner wires $\lambda$ and $\lambda' = \lambda +1$
that swap eight times.  This yields two types of loops (see
\cref{fig:variable-gadget}): four $\lambda'$--$\lambda$ loops,
where $\lambda'$ is on the left and $\lambda$ is on the right side,
and three $\lambda$--$\lambda'$ loops with $\lambda$ on the left and
$\lambda'$ on the right side.  Notice that we consider only
\emph{closed} loops, which are bounded by swaps between $\lambda$ and
$\lambda'$.  In the following, we construct variable and clause
gadgets.  Each variable gadget will contain a specific wire that
represents the variable, and each clause gadget will contain a
specific wire that represents the clause.  The corresponding variable
and clause wires swap in one of the four $\lambda'$--$\lambda$ loops.
We call the first two $\lambda'$--$\lambda$ loops \emph{true-loops},
and the last two $\lambda'$--$\lambda$ loops \emph{false-loops}.  If
the corresponding variable is true, then the variable wire swaps with
the corresponding clause wires in a true-loop, otherwise in a
false-loop.

Apart from $\lambda$ and $\lambda'$, our list $L$ contains (many)
other wires, which we split into groups.  For every $i \in [n]$, we
introduce sets $V_i$ and $V'_i$ of wires that together form the gadget
for variable $w_i$ of~$\Phi$.  These sets are ordered (initially)
$V_n < V_{n-1} < \dots < V_1 < \lambda < \lambda' < V'_1 < V'_2 <
\dots < V'_n$; the order of the wires inside these sets will be
detailed in the next two paragraphs.  Let
$V = V_1 \cup V_2 \cup \dots\cup V_n$ and
$V' = V'_1 \cup V'_2 \cup \dots \cup V'_n$.  Similarly, for every
$j \in [m]$, we introduce a set $C_j$ of wires that contains a
\emph{clause wire} $c_j$ and three sets of wires $D^1_j$, $D^2_j$, and
$D^3_j$ that represent occurrences of variables in a clause $d_j$ of
$\Phi$.  The wires in~$C_j$ are ordered $D^3_j < D^2_j < D^1_j < c_j$.
Together, the wires in $C = C_1 \cup C_2 \cup \dots \cup C_m$
represent the clause gadgets; they are ordered
$V < C_m < C_{m-1} < \dots < C_1 < \lambda$.  Additionally, our list
$L$ contains a set~$E=\{\varphi_1,\dots,\varphi_7\}$ of wires that
will make our construction rigid enough.  The order of all wires in
$L$ is $V < C < \lambda < \lambda' < E < V'$.  Now we present our
gadgets in more detail.

\proofparagraph{Variable gadget.}
First we describe the variable gadget,
which is illustrated in \cref{fig:variable-gadget}.
For each variable $w_i$ of $\Phi$, $i \in [n]$, we introduce two sets of
wires $V_i$ and~$V'_i$.  Each $V'_i$ contains a \emph{variable wire}
$v_i$ that has four swaps with $\lambda$ and no swaps with
$\lambda'$.  Therefore, $v_i$ intersects at least one and at most two
$\lambda'$--$\lambda$ loops. In order to prevent~$v_i$ from
intersecting both a true- and a false-loop, we introduce two wires
$\alpha_i \in V_i$ and $\alpha'_i \in V'_i$ with
$\alpha_i < \lambda < \lambda' < \alpha'_i < v_i$; see
\cref{fig:variable-gadget}.  These wires neither swap with $v_i$
nor with each other, but they have two swaps with both $\lambda$ and
$\lambda'$.  We want to force $\alpha_i$ and $\alpha'_i$ to have the
two true-loops on their right and the two false-loops on their left,
or vice versa.  This will ensure that $v_i$ cannot reach both a true-
and a false-loop simultaneously.

To this end, we introduce, for $j \in [5]$, a \emph{$\beta_i$-wire}
$\beta_{i,j} \in V_i$ and a \emph{$\beta'_i$-wire}
$\beta'_{i,j} \in V'_i$.  These are ordered
$\beta_{i,5} < \beta_{i,4} < \dots < \beta_{i,1} < \alpha_i$ and
$\alpha'_i < \beta'_{i,1} < \beta'_{i,2} < \dots < \beta'_{i,5} <
v_i$.  Every pair of $\beta_i$-wires as well as every pair of
$\beta'_i$-wires swaps exactly once. Neither $\beta_i$- nor
$\beta'_i$-wires swap with $\alpha_i$ or $\alpha'_i$. Each
$\beta'_i$-wire has four swaps with~$v_i$. Moreover,
$\beta_{i,1}, \beta_{i,3}, \beta_{i,5}, \beta'_{i,2}, \beta'_{i,4}$
swap with $\lambda$ twice.  Symmetrically,
$\beta_{i,2}, \beta_{i,4}, \beta'_{i,1}, \beta'_{i,3}, \beta'_{i,5}$
swap with $\lambda'$ twice; see \cref{fig:variable-gadget}.

We use the $\beta_i$- and $\beta'_i$-wires to fix the minimum number
of $\lambda'$--$\lambda$ loops that are on the left of $\alpha_i$ and
on the right of $\alpha'_i$, respectively.  Note that, together with
$\lambda$ and $\lambda'$, the $\beta_i$- and $\beta'_i$-wires have the
same rigid structure as the wires
shown in \cref{fig:many_loops}.

\begin{figure}[tb]
	\centering
	\includegraphics[width=\linewidth]{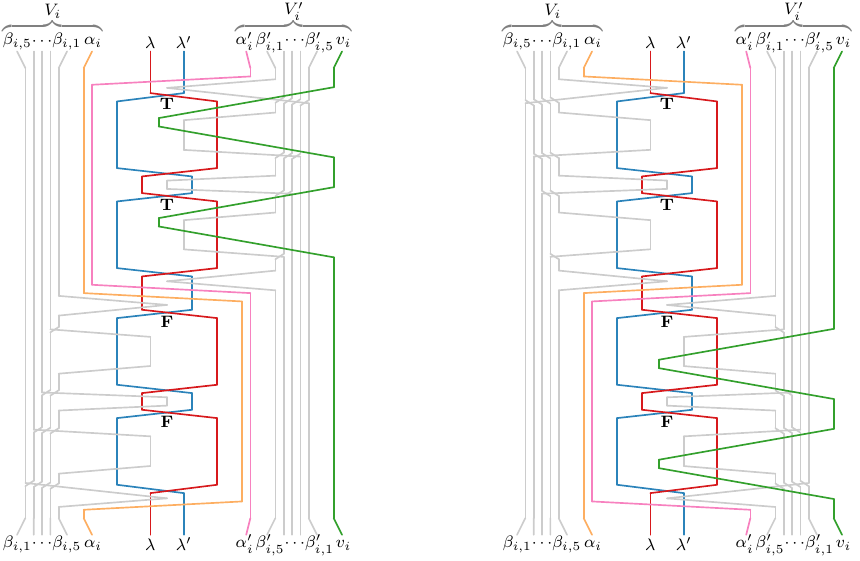}
	\caption{A variable gadget with a variable wire $v_i$ that
		corresponds to the variable~$w_i$ that is true (left)
                or false (right).
		The $\lambda$--$\lambda'$ loops are labeled
		\textbf{T} for true and \textbf{F} for false.}
	\label{fig:variable-gadget}
\end{figure}

This means that there is a unique order of swaps between the
$\beta_i$-wires and $\lambda$ or $\lambda'$, i.e., for $j \in [4]$,
every pair of $(\beta_{i,j+1},\lambda)$ swaps (or
$(\beta_{i,j+1},\lambda')$ swaps, depending on the parity of $j$) can
be done only after the pair of $(\beta_{i,j},\lambda')$ swaps (or
$(\beta_{i,j},\lambda)$ swaps, respectively).  We have the same rigid
structure on the right side with $\beta'_i$-wires.  Hence, there are
at least two $\lambda'$--$\lambda$ loops to the left of $\alpha_i$ and
at least two to the right of $\alpha'_i$.  Since $\alpha_i$ and
$\alpha'_i$ do not swap, there cannot be a $\lambda'$--$\lambda$ loop
that appears simultaneously on both sides.

Note that the $(\lambda,\lambda')$ swaps that belong to the same side
need to be consecutive, otherwise $\alpha_i$ or $\alpha'_i$ would have
to swap more than twice with $\lambda$ and $\lambda'$.  Thus, there are
only two ways to order the swaps among the wires $\alpha_i$,
$\alpha'_i$, $\lambda$, $\lambda'$; the order is either
$(\alpha'_i,\lambda')$, $(\alpha'_i,\lambda)$, four times
$(\lambda,\lambda')$, $(\alpha'_i,\lambda)$,
$(\alpha'_i,\lambda')$, $(\alpha_i,\lambda)$,
$(\alpha_i,\lambda')$, four times $(\lambda,\lambda')$,
$(\alpha_i,\lambda')$, $(\alpha_i,\lambda)$ (see
\cref{fig:variable-gadget} (left)) or the reverse (see
\cref{fig:variable-gadget} (right)).  It is easy to see that in the
first case $v_i$ can reach only the first two $\lambda'$--$\lambda$
loops (the true-loops), and in the second case only the last two (the
false-loops).

To avoid that the gadget for variable~$w_i$ restricts the proper
functioning of the gadget for some variable~$w_j$ with $j>i$, we add
the following swaps to~$L$: for any $j>i$, $\alpha_j$ and $\alpha'_j$
swap with both $V_i$ and $V'_i$ twice, the $\beta_j$-wires swap with
$\alpha'_i$ and $V_i$ four times, and, symmetrically,
the $\beta'_j$-wires swap with $\alpha_i$ and $V'_i$ four times,
and~$v_j$ swaps with $\alpha_i$ and~$V'_i$ six times.

We briefly explain these multiplicities.
Wire $\alpha_j$ ($\alpha'_j$, resp.)
swaps with all wires in~$V_i$ and~$V'_i$ twice
so that it reaches the corresponding
$\lambda$--$\lambda'$ or $\lambda'$--$\lambda$ loops
by first crossing all wires of $V_i$ ($V'_i$).
Then $\alpha_j$ ($\alpha'_j$) crosses all wires of $V'_i$ ($V_i$)
and finally goes back crossing all of them a second time.
This way, there are no restrictions for $\alpha_j$ ($\alpha'_j$)
regarding which $\lambda'$--$\lambda$ loops it encloses.
For the variable wire~$v_j$, see \cref{fig:two-variables}.
It swaps six times with~$\alpha_i$
and $V'_i$, which guarantees that $v_j$
can reach both upper or both lower $\lambda'$--$\lambda$ loops.
If $w_i$ and $w_j$ represent the same truth value,
$v_j$ needs a total of four swaps with each of the
$\beta'_i$-wires to enter a loop and to go back.
In this case, $v_j$ can make the six swaps with $\alpha_i$
and $\alpha'_i$ next to the
$\lambda'$--$\lambda$ loops representing the other truth value.
Then, $v_j$ uses the two extra swaps with each of the
$\beta'_i$-wires to reach $\alpha_i$ and $\alpha'_i$;
see \cref{fig:two-variables} (left).
If $w_i$ and $w_j$ represent distinct truth values,
$v_j$ needs a total of four swaps with each of the $\beta'_i$-wires
and with $\alpha'_i$ and $\alpha_i$ to enter a loop and to go back.
We can accommodate the two extra swaps with each of these wires
afterwards; see \cref{fig:two-variables} (right).
For the $\beta_j$-wires (and $\beta'_j$-wires),
we use the same argument as for~$v_j$ above,
but each of these wires has only four swaps
with each of the other wires in $V_i \cup \{\alpha'_i\}$
(or $V'_i \cup \{\alpha_i\}$)
because a $\beta$-wire (or a $\beta'$-wire) needs to reach only one
$\lambda$--$\lambda'$ or $\lambda'$--$\lambda$ loop
instead of two (as~$v_j$).

\begin{figure}[tb]
	\centering
	\includegraphics[width=\linewidth]{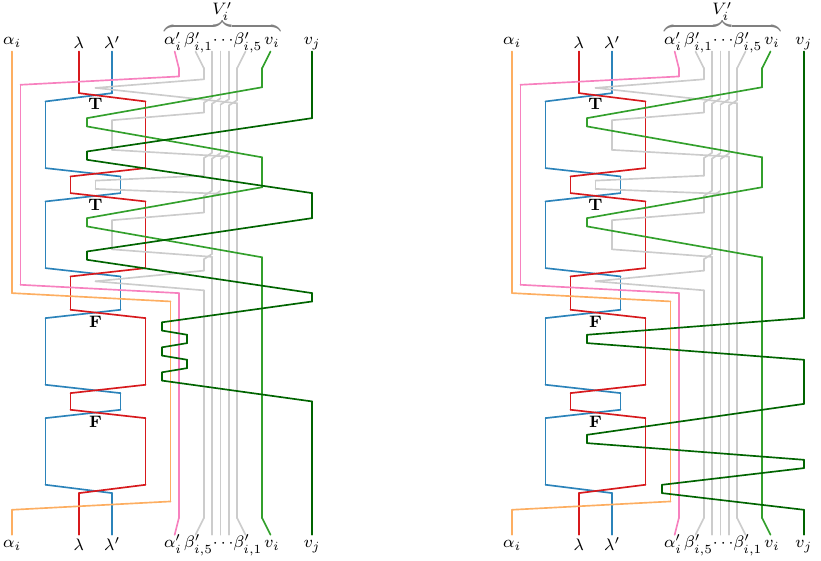}
	\caption{A realization of swaps between the variable wire $v_j$ and
		all wires that belong to the variable gadget corresponding to the
		variable $w_i$.  On the left the variables $w_i$ and $w_j$ are both
		true, and on the right $w_i$ is true, whereas $w_j$ is false.}
	\label{fig:two-variables}
\end{figure}

\proofparagraph{Clause gadget.}
For every clause $d_j$ from $\Phi$, $j \in [m]$, we introduce a set of
wires~$C_j$.  It contains the clause wire $c_j$ that has eight swaps
with $\lambda'$.  We want to force each $c_j$ to appear in all
$\lambda'$--$\lambda$ loops.  To this end, we use
(once for all clause gadgets) the set~$E$ with the
seven \emph{$\varphi$-wires} $\varphi_1, \dots, \varphi_7$ ordered
$\varphi_1 < \dots < \varphi_7$.  They create a rigid structure
according to Observation~\ref{obs:unique} similar to the $\beta_i$-wires.
Each pair of $\varphi$-wires swaps exactly once.
For each $k \in [7]$, if $k$ is
odd, then $\varphi_k$ swaps twice with $\lambda$ and twice with $c_j$ for
every $j \in [m]$.  If $k$ is even, then $\varphi_k$ swaps twice with
$\lambda'$. Since $c_j$ does not swap with~$\lambda$, each pair of
swaps between~$c_j$ and a $\varphi$-wire with odd index appears inside
a $\lambda'$--$\lambda$ loop.  Due to the rigid structure, each of
these pairs of swaps occurs in a different $\lambda'$--$\lambda$ loop;
see \cref{fig:clause-gadget}.

\begin{figure}[tb]
	\centering
	\includegraphics[width=0.47\linewidth]{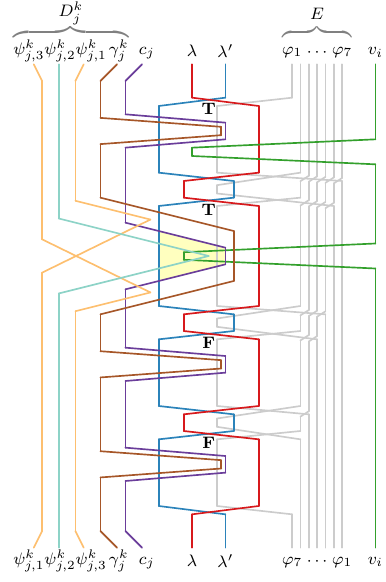}
	\caption{A gadget for clause $c_j$ showing only one of the
          three variable wires, namely $v_i$.  The region shaded in
          yellow is the arm of $c_j$ that is protected from other
          variables by $\gamma_j^k$.}
	\label{fig:clause-gadget}
\end{figure}

If a variable $w_i$ belongs to a clause $d_j$, then our list
$L$ contains two $(v_i,c_j)$ swaps.  Since every clause has exactly
three different positive variables, we want to force variable wires
that belong to the same clause to swap with the corresponding clause
wire in different $\lambda'$--$\lambda$ loops.  This way, every clause
contains at least one true and at least one false variable if $\Phi$ is
satisfiable.

\begin{figure}[t]
	\centering
	\includegraphics[width=\textwidth]{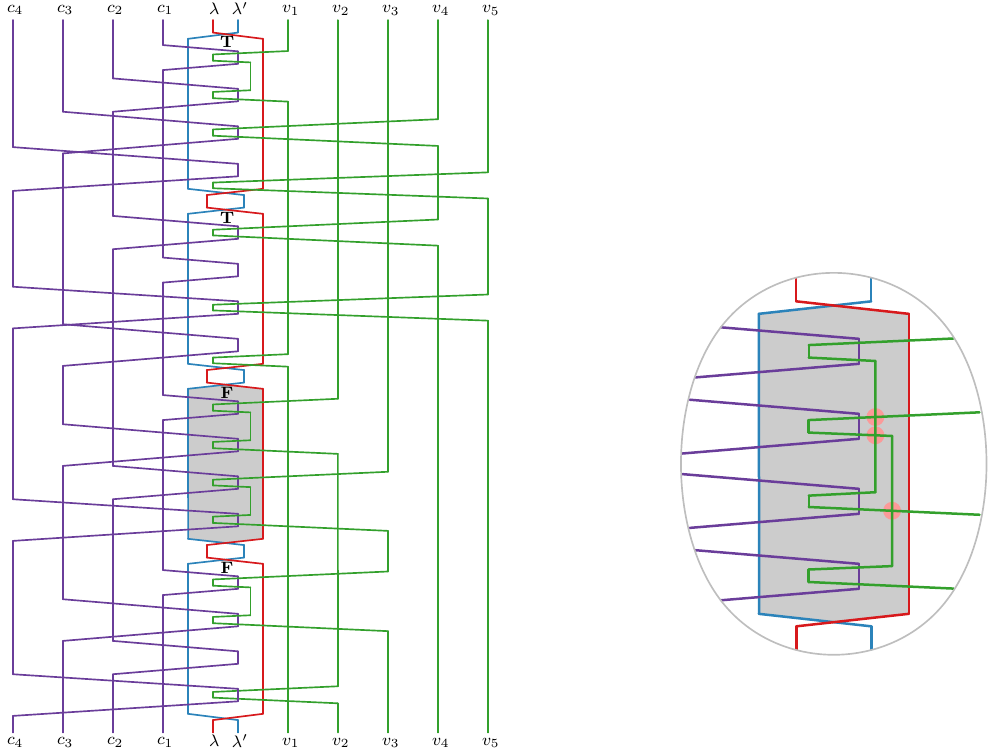}
	\caption{Tangle obtained from the satisfiable formula
		$\Phi = (w_1 \vee w_2 \vee w_3) \wedge (w_1 \vee w_3 \vee w_4) \wedge
		(w_2 \vee w_3 \vee w_4) \wedge (w_2 \vee w_3 \vee w_5)$.  Here,
		$w_1$, $w_4$ and $w_5$ are set to true, whereas $w_2$ and $w_3$
		are set to false.  We show only %
		$\lambda$, $\lambda'$, and all variable and clause wires. \\
		Inset:
		problems that occur if variable wires swap with clause wires in a
		different order.}
	\label{fig:formula}
\end{figure}

We call the part of a clause wire $c_j$ that is inside a
$\lambda'$--$\lambda$ loop %
an \emph{arm} of $c_j$.  We want to ``\emph{protect}'' the arm
that is intersected by a variable wire from other variable wires.  To
this end, for every occurrence $k \in [3]$ of a variable in~$d_j$, we
introduce four more wires.  The wire $\gamma_j^k$ will protect the arm
of $c_j$ that the variable wire of the $k$-th variable of $d_j$
intersects.
Below we detail how to realize this protection.  For now, just note
that, in order not to restrict the
choice of the $\lambda'$--$\lambda$ loop, $\gamma_j^k$ swaps twice
with $\varphi_\ell$ for every odd $\ell \in [7]$.  Similarly to $c_j$,
the wire~$\gamma_j^k$ has eight swaps with $\lambda'$ and appears once
in every $\lambda'$--$\lambda$ loop.  Additionally, $\gamma_j^k$ has
two swaps with~$c_j$.

We force $\gamma_j^k$ to protect the correct arm in the following
way.  Consider the $\lambda'$--$\lambda$ loop where an arm of $c_j$
swaps with a variable wire $v_i$.  We want the order of swaps along
$\lambda'$ inside this loop to be fixed as follows: $\lambda'$ first
swaps with $\gamma_j^k$, then twice with~$c_j$, and then again
with~$\gamma_j^k$.  This would prevent all variable wires that do not
swap with $\gamma_j^k$ from reaching the arm of~$c_j$.  To achieve
this, we introduce three \emph{$\psi_j^k$-wires}
$\psi_{j,1}^k, \psi_{j,2}^k, \psi_{j,3}^k$ with
$\psi_{j,3}^k < \psi_{j,2}^k < \psi_{j,1}^k < \gamma_j^k$.

The $\psi_j^k$-wires also have the rigid structure
according to Observation~\ref{obs:unique} similar to the
$\varphi$- and $\beta_i$-wires, so that there is a unique order of swaps
along each $\psi_j^k$-wire.  Each pair of $\psi_j^k$-wires swaps
exactly once, $\psi_{j,1}^k$ and $\psi_{j,3}^k$ have two swaps with
$c_j$, and $\psi_{j,2}^k$ has two swaps with $\lambda'$ and
$v_i$. Note that no $\psi_j^k$-wire swaps with~$\gamma_j^k$.  Also,
since $\psi_{j,2}^k$ does not swap with $c_j$, the
$(\psi_{j,2}^k,v_i)$ swaps can appear only inside the
$\lambda'$--$c_j$ loop that contains the arm of~$c_j$ we want to
protect from other variable wires.  Since $c_j$ has to swap with
$\psi_{j,1}^k$ before and with $\psi_{j,3}^k$ after the
$(\psi_{j,2}^k,\lambda')$ swaps, and since there are only two swaps
between $\gamma_j^k$ and $c_j$, there is no way for any variable wire
except for $v_i$ to reach the arm of $c_j$ without also intersecting
$\gamma_j^k$; see \cref{fig:clause-gadget}.

Finally, we consider the behavior
among $D^1_j$, $D^2_j$, and $D^3_j$,
as well as the behavior of the wires from $C_j$
with other clause and variable gadgets.
For $k, \ell \in [3]$ with $k \ne \ell$,
every $\psi_j^k$-wire has two swaps with every $\psi_j^\ell$-wire
in order to not restrict the $\psi$-wires to a specific
$\lambda'$--$\lambda$ loop.
Furthermore, $\gamma_j^k$ has four swaps with every $\psi_j^\ell$-wire
and two swaps with every wire~$\gamma_j^\ell$.
Note that, if $k > \ell$, it suffices for~$\gamma_j^k$
to cross each $\gamma_j^\ell$ at the $\lambda'$--$\lambda$ loop
where $\gamma_j^k$ ``protects'' $c_j$ and to go back directly afterwards.
Also, $\gamma_j^k$ uses one swap to cross
the $\psi_j^\ell$-wires before the first $\lambda'$--$\lambda$ loop,
then, using two swaps, the $\psi_j^\ell$-wires
cross $\gamma_j^k$ next to the $\lambda'$--$\lambda$ loop
where $\gamma_j^\ell$ ``protects'' $c_j$,
and $\gamma_j^k$ uses the fourth swap to go back after
the last $\lambda'$--$\lambda$ loop.

For every $i < j$ and every $k \in [3]$, the wires
$c_j$ and $\gamma_j^k$ have eight swaps with every wire in~$C_i$,
which allows $c_j$ and $\gamma_j^k$ to enter
every $\lambda'$--$\lambda$ loop and to go back.
Similarly, every $\psi_j^k$-wire
has two swaps with every wire in~$C_i$,
which allows the $\psi_j^k$-wire to reach
one of the $\lambda'$--$\lambda$ loops.
Since all wires in~$V$ are to the left of all wires in~$C$,
each wire in~$C$ swaps twice with all wires in~$V$
(including the $\alpha$-wires) and twice with all $\alpha'$-wires.
Finally, the $\varphi$-wires with odd index
have (beside two swaps with every $c_j$)
two swaps with every $\gamma_j^k$,
and all $\varphi$-wires have, for $i \in [n]$, two swaps
with every wire in $(V'_i \setminus \{v_i\}) \cup \{\alpha_i\}$
and four swaps with every wire $v_i$
to let the wires of the variable gadgets enter (or cross) the
$\lambda$--$\lambda'$ or $\lambda'$--$\lambda$ loops.

Note that the order of the arms of the clause wires inside a
$\lambda'$--$\lambda$ loop cannot be chosen arbitrarily.  If a
variable wire intersects more than one clause wire, the arms of these
clause wires occur consecutively, as for~$v_2$ and~$v_3$ in the
shaded region in \cref{fig:formula}.  If we had an interleaving
pattern of %
variable wires (see inset), say $v_2$ first intersects~$c_1$, then
$v_3$ intersects~$c_2$, then $v_2$ intersects~$c_3$, and finally $v_3$
intersects~$c_4$, then $v_2$ and~$v_3$ would need more swaps
than calculated in the analysis above.

\proofparagraph{Correctness.}
Clearly, if $\Phi$ is satisfiable, then there is a tangle obtained from
$\Phi$ as described above that realizes the list $L$, so $L$ is feasible;
see \cref{fig:formula} for an example.
On the other hand, if there is a tangle that realizes the list~$L$
that we obtain from the reduction, then $\Phi$ is satisfiable.  This
follows from the rigid structure of a tangle that realizes~$L$.  The
only flexibility is in which type of loop (true or false) a variable
wire swaps with the corresponding clause wire.  As described above, a
tangle exists if, for each clause, the corresponding three variable
wires swap with the clause wire in three different loops (at least one
of which is a true-loop and at least one is a false-loop).  In this
case, the position of the variable wires yields a truth assignment
satisfying~$\Phi$.

\proofparagraph{Membership in NP.}
To show that \listFeasibility is in NP, we proceed as
indicated in the introduction.  Given a list $L=(l_{ij})$, we guess a
list $L'=(l'_{ij})$ with $2(L)=2(L')$ and
$l'_{ij} \le \min \{ l_{ij}, n^2/4+1 \}$ together with a permutation
of its $\oh(n^4)$ swaps.  Then we can efficiently test whether we can
apply the swaps in this order to~$\id_n$.
If yes, then the list~$L'$ is
feasible (and, due to $L' \to L$, a witness for the feasibility
of~$L$), otherwise we discard it. Note that $L$ is
feasible if and only if such a list $L'$ exists and is feasible.
One direction is obvious by the definition of minimal feasible list.
To show the other direction, we assume that $L$ is feasible. Thus
there exists a minimal feasible list $L' = (l_{ij})$ of type $2(L)$
and by Proposition~\ref{prop:minimal-feasible-lists-upper-bound}
$l'_{ij} \le n^2/4+1$.

Note that this result does not show that the decision version of
\tangleMinimization is also in NP
because the minimum height can be exponential in the size
of the input.
\end{proof}

\section{Algorithms for Minimizing Tangle Height}
\label{sec:algorithms}

The two algorithms that we describe in this section test whether a
given list is feasible and, if yes, construct a height-optimal tangle
realizing the list.  We start with an observation and some definitions.

\begin{lemma}
	\label{lem:adjacent}
	Given a permutation $\pi \in S_n$, the number of permutations adjacent to $\pi$
	is $F_{n+1}-1$, where $F_n$ is the $n$-th Fibonacci number.
\end{lemma}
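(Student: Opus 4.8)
The plan is to count permutations $\sigma$ adjacent to $\pi$ by classifying them according to which neighboring pairs of positions are transposed. Since adjacency depends only on the \emph{positions} where $\pi$ and $\sigma$ differ (the set $\{i : \pi(i) \neq \sigma(i)\}$ splits into pairs $\{i,j\}$ with $\pi(i)=\sigma(j)$ and $\pi(j)=\sigma(i)$, and moreover $|\pi(i)-\sigma(i)| \le 1$ forces each such pair to occupy consecutive positions in the value-order), the count does not actually depend on $\pi$; we may as well take $\pi = \id_n$. Then an adjacent permutation $\sigma$ is obtained from $\id_n$ by choosing a set of pairwise non-adjacent edges in the path on vertices $1, 2, \dots, n$ (a \emph{matching} in the path $P_n$) and swapping the two endpoints of each chosen edge. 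Distinct matchings give distinct permutations, so the number of permutations adjacent to or equal to $\pi$ equals the number of matchings in $P_n$, including the empty matching.

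\medskip
The key step is then the standard fact that the number of matchings in the path $P_n$ (on $n$ vertices, hence $n-1$ edges) is $F_{n+1}$, where $F_n$ is the $n$-th Fibonacci number (with $F_1 = F_2 = 1$). First I would set up the recurrence: let $M_n$ be the number of matchings of $P_n$. Conditioning on whether the last vertex $n$ is matched: if it is unmatched, the rest is a matching of $P_{n-1}$, contributing $M_{n-1}$; if it is matched, it must be matched to $n-1$, and the remaining vertices $1, \dots, n-2$ form an arbitrary matching of $P_{n-2}$, contributing $M_{n-2}$. Hence $M_n = M_{n-1} + M_{n-2}$. Checking base cases $M_1 = 1 = F_2$ and $M_2 = 2 = F_3$ (the empty matching and the single edge), induction gives $M_n = F_{n+1}$ for all $n \ge 1$.

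\medskip
Finally, since the empty matching corresponds to $\sigma = \pi$ itself, which is not counted among permutations adjacent to $\pi$, the number of permutations strictly adjacent to $\pi$ is $M_n - 1 = F_{n+1} - 1$, as claimed. I do not expect a genuine obstacle here; the only point requiring a little care is the reduction to counting matchings, i.e., justifying that the map from matchings of $P_n$ to permutations adjacent-or-equal to $\pi$ is a bijection — injectivity follows because the matching can be recovered from $\sigma$ as exactly the set of positions $i$ with $\sigma(i) = \pi(i)+1$ (equivalently, the transposed pairs), and surjectivity is the structural observation about adjacency quoted from the paper's preliminaries.
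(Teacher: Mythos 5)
Your proof is correct and is essentially the paper's argument in different packaging: the paper performs the induction directly on the set of $\pi$ together with its adjacent permutations, conditioning on whether the last wire stays fixed or is swapped with its neighbor, which is exactly your recurrence $M_n = M_{n-1} + M_{n-2}$ for matchings of the path $P_n$. Your explicit bijection between adjacent-or-equal permutations and matchings of $P_n$ (and the subtraction of the empty matching) is a clean and fully valid way to state the same decomposition.
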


\begin{proof}
	By induction on~$n$, we prove the slightly different claim that
	the set~$P(\pi)$ consisting of $\pi$ and its adjacent permutations
	has size $F_{n+1}$.
	
	For $n = 1$, there is only one permutation `1'
	and, hence, $|P(1)| = 1$.
	Furthermore, for $n = 2$, there are only two permutations
	`12` and `21', which are also adjacent.
	Thus, $|P(12)| = |P(21)| = 2$.
	These are the Fibonacci numbers $F_2$ and $F_3$.
	
	Let $n > 2$ and $\pi \in S_n$.
	Note that we can partition the permutations in $P(\pi)$ into two groups.
	The first group $P_1(\pi)$ contains the permutations of~$P(\pi)$
	where the last wire is the same as in~$\pi$,
	and the second group $P_2(\pi)$ contains the permutations of~$P(\pi)$
	where the last wire is different from~$\pi$.
	Clearly, $|P(\pi)| = |P_1(\pi)| + |P_2(\pi)|$.
	
	To obtain $P_1(\pi)$, we first remove the last wire of~$\pi$.
	This yields a permutation $\pi' \in S_{n-1}$ (after possibly
        renaming the wires to avoid a gap in our naming scheme).
	By our inductive hypothesis, $|P(\pi')| = F_{n}$.
	Then, we append the last wire of $\pi$ to every
	permutation in $P(\pi')$.  This yields~$P_1(\pi)$.
	
	For $P_2(\pi)$, observe that the last two wires in every
        permutation in $P_2(\pi)$ are swapped compared to~$\pi$.
	We remove these two wires from~$\pi$.  This yields a
        permutation~$\pi'' \in S_{n-2}$.
	Again by our inductive hypothesis, $|P(\pi'')| = F_{n-1}$.
	Then, we append the last two wires of~$\pi$ in swapped order to
	every permutation in $P(\pi'')$.  This yields~$P_2(\pi)$.

	Summing up, we get $|P(\pi)| = F_n + F_{n-1} = F_{n+1}$.
\end{proof}

We say that a list is consistent if the final
positions of all wires form a permutation of~$S_n$.
In \cref{sec:fpt}, we show that this condition is sufficient for the
feasibility of simple (and, more generally, for odd) lists.
Clearly, an even list is always consistent.
More formally, we define consistency as follows.
For a permutation $\pi\in S_n$ and a list $L=(l_{ij})$, we define
the map $\pi L\colon [n]\to[n]$ by
\[i \;\mapsto\; \pi(i)+|\{j\colon  \pi(i)<\pi(j)\le n\text{ and
  }l_{ij}\text{ odd}\}| - |\{j\colon 1\le \pi(j)<\pi(i)\text{ and
  }l_{ij}\text{ odd}\}|.\]
For each wire $i \in [n]$, $\pi L(i)$ is the final position of $i$,
that is, the position after all swaps in~$L$ have been applied to $\pi$.
A list~$L$ is called \emph{$\pi$-consistent} if $\pi L\in S_n$, or,
more rigorously, if
$\pi L$ induces a permutation of $[n]$. An $\id_n$-consistent
list is \emph{consistent}.  For example, the list $\{(1,2), (2,3), (1,3)\}$ is
consistent, whereas the list $\{(1,3)\}$ is not.
If~$L$ is not consistent, then it is clearly not feasible.
However, not all consistent lists are feasible.
For example, the list $\{13, 13\}$ is consistent but not feasible.
For a list $L=(l_{ij})$, we define $1(L)=(l_{ij} \bmod 2)$.  Since
$\id_n L=\id_n 1(L)$, the list $L$ is consistent if and only if $1(L)$
is consistent.  We can compute $1(L)$ and check its consistency in
$\oh(n+|1(L)|) \subseteq \oh(n^2)$ time
(see Proposition~\ref{prop:feasibility-simple} for details).
Hence, in the sequel we assume that all lists are consistent.

For any permutation $\pi\in S_n$, we define the simple list
$L(\pi)=(l_{ij})$ such that for $1 \le i<j \le n$, $l_{ij}=0$ if $\pi(i)<\pi(j)$, and $l_{ij}=1$ otherwise.

We need the following two lemmas for our algorithms.

\begin{lemma}
	\label{lem:Lpi1}
	For every permutation $\pi\in S_n$,
	$L(\pi)$ is the unique simple list
	with $\id_n L(\pi)=\pi$. In other words,
	$L(\pi)$ is the unique simple list
	such that applying all swaps
	in $L(\pi)$ to $id_n$ yields $\pi$.
\end{lemma}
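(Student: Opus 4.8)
The plan is to establish the two halves of the statement separately: that $\id_n L(\pi)=\pi$ (so that $L(\pi)$ is \emph{a} simple list with the claimed property), and that no other simple list has this property.

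\emph{Existence.} I would simply unfold the definition of the position map applied to $L(\pi)$. By definition, for $i<j$ the entry $l_{ij}$ of $L(\pi)$ equals $1$ iff $\pi(i)>\pi(j)$; by symmetry, for $j<i$ the entry $l_{ij}=l_{ji}$ equals $1$ iff $\pi(j)>\pi(i)$. Substituting this (and $\id_n(k)=k$) into the defining formula for $\id_n L(\pi)$ yields
\[\id_n L(\pi)(i)=i+|\{j>i:\pi(j)<\pi(i)\}|-|\{j<i:\pi(j)>\pi(i)\}|.\]
It then remains to verify that the right-hand side equals $\pi(i)$. This is the standard inversion count: $\pi(i)-1=|\{j:\pi(j)<\pi(i)\}|$ splits as $|\{j<i:\pi(j)<\pi(i)\}|+|\{j>i:\pi(j)<\pi(i)\}|$, and the first summand equals $(i-1)-|\{j<i:\pi(j)>\pi(i)\}|$; substituting and rearranging gives exactly the displayed expression. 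In particular $\id_n L(\pi)\in S_n$, so $L(\pi)$ is consistent.

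\emph{Uniqueness.} I would argue by induction on $n$, the cases $n\le 2$ being immediate. Let $L'=(l'_{ij})$ be a simple list with $\id_n L'=\pi\in S_n$, and let $w=\pi^{-1}(1)$ be the wire that ends up in the leftmost position. Writing $\pi(w)=\id_n L'(w)=w+r-\ell$ with $r=|\{m>w:l'_{wm}=1\}|\ge 0$ and $\ell=|\{m<w:l'_{wm}=1\}|\le w-1$, the equality $r-\ell=1-w$ forces $\ell=w-1$ and $r=0$: wire $w$ swaps with every wire initially to its left and with no wire initially to its right. Since $\pi(w)=1$ is minimal, this already shows that $L'$ and $L(\pi)$ agree in the row of $w$. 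Now delete $w$, relabel the remaining wires $1,\dots,n-1$ keeping their relative order, and let $L''$ be the resulting simple list on $[n-1]$. A short computation — using that $w$ swaps with exactly the wires to its left — shows that for every surviving wire $k$ the position $\id_{n-1}L''(k)$ equals $\pi(k)-1$: removing $w$ lowers the initial index of $k$ by $1$ when $k>w$, and lowers one of the two swap counts in the position formula by $1$ when $k<w$, and in both cases the net change is exactly $-1$. Hence $\id_{n-1}L''$ is the permutation obtained from $\pi$ by deleting the value $1$, so by the inductive hypothesis $L''$ is uniquely determined by it; translating back through the relabeling, $L'$ agrees with $L(\pi)$ on all pairs avoiding $w$, and we already saw it agrees on the row of $w$. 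Therefore $L'=L(\pi)$.

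The bulk of the work — and the only place where care is really needed — is the bookkeeping in the inductive step: one must track precisely how each surviving wire's initial index and its two swap counts in the position formula change when $w$ is removed, and check that these two competing effects always combine to a shift by exactly $-1$. Everything else reduces to the inversion-count identity used in the existence part.
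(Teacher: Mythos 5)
Your proof is correct and follows essentially the same route as the paper's: the same inversion-count computation for $\id_n L(\pi)=\pi$, and the same induction for uniqueness in which one wire's row is forced by an extremal position and the wire is then deleted. The only (cosmetic) difference is that you peel off the wire landing at position $1$, whereas the paper peels off the wire landing at position $n$.
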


\begin{proof}
	By definition, $\id_n L(\pi)$ is a map from $[n]$ to $\mathbb Z$,
        \begin{align*}
		i \mapsto& \; i+\left|\{j\colon i<j\le n\mbox{ and }\pi(i)>\pi(j)\}\right|
		-\left|\{j\colon 1\le j<i\mbox{ and }\pi(i)<\pi(j)\}\right|\\
		= & \; i+\left|\{j\colon i<j\le n\mbox{ and }\pi(i)>\pi(j)\}\right| + \left|\{j\colon  1\le j<i\mbox{ and }\pi(i)>\pi(j)\}\right|\\
		& \; \phantom{i} - \left|\{j\colon 1\le j<i\mbox{ and }\pi(i)>\pi(j)\}\right| - \left|\{j\colon 1\le j<i\mbox{ and }\pi(i)<\pi(j)\}\right|\\
		= & \; i+\left|\{j\colon 1<j\le n\mbox{ and }\pi(i)>\pi(j)\}\right| - \left|\{j\colon 1\le j<i\}\right|\\
		= & \; i+(\pi(i)-1)-(i-1)=\pi(i).
         \end{align*}
	
	Assume that $L=(l_{ij})$ is a simple list such that $\id_n L=\pi$. That is, for each $i\in [n]$, we
	have
        \[\pi(i)=i+|\{j\colon i<j\le n \mbox{ and }l_{ij}=1\}|-|\{j\colon 1\le j<i\mbox{ and }l_{ij}=1\}|.\]
	
	We show that the list $L$ is uniquely determined by the permutation $\pi$ by induction with respect
	to $n$. For $n=2$, there exist only two simple lists
	$\left(\begin{smallmatrix}0&0\\0&0\end{smallmatrix}\right)$ and
	$\left(\begin{smallmatrix}0&1\\1&0\end{smallmatrix}\right)$. Since
	$\id_2\left(\begin{smallmatrix}0&0\\0&0\end{smallmatrix}\right)=\id_2$
	and
	$\id_2\left(\begin{smallmatrix}0&1\\1&0\end{smallmatrix}\right)=21$,
	we have uniqueness.  Now assume that $n\ge 3$.
        Then, for $k=\pi^{-1}(n)$, we have
        \begin{align*}
		n=\pi(\pi^{-1}(n))
		=\pi(k)
		=k+|\{j\colon k<j\le n\mbox{ and }l_{kj}=1\}|
		-|\{j\colon 1\le j<k\mbox{ and }l_{kj}=1\}|.
        \end{align*}
	Since
        \[|\{j\colon k<j\le n\mbox{ and }l_{kj}=1\}|\le |\{j\colon k<j\le n\}|=n-k\]
	and
        \[|\{j\colon 1\le j<k\mbox{ and }l_{kj}=1\}|\ge 0,\]
	the equality holds if and only if $l_{kj}=1$ for each $k<j\le n$ and $l_{kj}=0$ for
	each $1\le j<k$. These conditions determine the $k$-th row (and
	column) of the matrix~$L$.
	
	It is easy to see that a map $\pi'\colon [n-1]\to\mathbb Z$ such that $\pi'(i)=\pi(i)$ for $i<k$ and
	$\pi'(i)=\pi(i+1)$ for $i\ge k$ is a permutation.
	
	Let $L'=(l'_{ij})$ be a simple list of order $n-1$ obtained from $L$ by removing the $k$-th row and
	column. For each $i\in [n-1]$, we have
        \[\id_n L'(i)=i+|\{j\colon i<j\le n-1 \mbox{ and }l'_{ij}
		=1\}|-|\{j\colon 1\le j<i\mbox{ and }l'_{ij}=1\}|.\]
	If $i<k$ then $l_{ik}=0$.  Thus,
        \begin{align*}
		\id_n L'(i)&=i+|\{j\colon i<j\le n \mbox{ and }l_{ij}=1\}|
		-|\{j\colon 1\le j<i\mbox{ and }l_{ij}=1\}|\\
		&=\pi(i)=\pi'(i).
        \end{align*}
	If on the other hand $i\ge k$, then $l_{i+1,k}=1$.  Hence,
        \begin{align*}
            \id_n L'(i) =& \, i+|\{j\colon i<j\le n-1 \mbox{ and }l_{i+1,j+1}=1\}|\\
                        & \, \phantom{i} -|\{j\colon 1\le j<i+1\mbox{ and }l_{i+1,j}=1\}|+1\\
                        =& \, i+1+|\{j\colon i+1<j+1\le n \mbox{ and }l_{i+1,j+1}=1\}|\\
                        & \, \phantom{i} -|\{j\colon 1\le j<i+1\mbox{ and }l_{i+1,j}=1\}|\\
                        =& \, \pi(i+1)=\pi'(i).
        \end{align*}
	
	Thus, $\id_n L'=\pi'$. By the inductive hypothesis, the list $L'$ is uniquely determined by the
	permutation $\pi'$, so the list $L$ is uniquely determined by the permutation $\pi$.
\end{proof}

Given a tangle $T$, we define the list \emph{$L(T)$} $=(l_{ij})$,
where $l_{ij}$ is the number of occurrences of swap~$(i,j)$ in~$T$.

\begin{lemma}
	\label{lem:final-independence}
	For every tangle $T=\langle\pi_1, \pi_2,\dots, \pi_h\rangle$,
	we have $\pi_1L(T)=\pi_h$.
\end{lemma}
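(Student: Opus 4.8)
The plan is to reduce the statement to the case of a single swap and then argue by induction on the height~$h$ of the tangle. The key observation is that the map $\pi \mapsto \pi L$ only depends on the parities of the entries of~$L$ (since $\id_n L = \id_n 1(L)$, and more generally $\pi L = \pi\, 1(L)$), so it suffices to track how a single application of a swap changes the ``final position'' bookkeeping. First I would recall that $\pi_1 L(T)$, by the definition of the map $\pi L$, records for each wire~$i$ its starting position $\pi_1(i)$ corrected by the net number of wires that must cross it an odd number of times (those to its right that end up on its left, minus those to its left that end up on its right, as dictated by the parities in $L(T)$). The goal is to show this equals $\pi_h(i)$.

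For the induction, the base case $h = 1$ is immediate: the tangle consists of a single permutation, $L(T)$ is the empty (all-zero) list, and $\pi_1 L(T) = \pi_1 = \pi_h$. For the inductive step, write $T = \langle \pi_1, \dots, \pi_{h-1}, \pi_h\rangle$ and let $T' = \langle \pi_1, \dots, \pi_{h-1}\rangle$, so that $L(T) = L(T') \cup \diff(\pi_{h-1}, \pi_h)$. By the inductive hypothesis, $\pi_1 L(T') = \pi_{h-1}$. The step between $\pi_{h-1}$ and $\pi_h$ consists of a set of disjoint swaps (the pairs in $\diff(\pi_{h-1}, \pi_h)$); adding these swaps to $L(T')$ flips the parity of $l_{ij}$ for exactly those pairs $\{i,j\}$ that are transposed. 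I would then verify the \emph{key claim}: if $L'' = L' \cup \{(i,j)\}$ where $(i,j) \in \diff(\pi, \sigma)$ with $\sigma$ adjacent to $\pi$, and $\rho$ is a permutation with $\rho L' = \pi$, then $\rho L'' = \sigma$. This is a direct computation from the definition of the map: flipping the parity of the single entry $l_{ij}$ changes $\rho L''(i)$ and $\rho L''(j)$ each by exactly one (in opposite directions, since $i$ and $j$ become adjacent-and-transposed), and leaves $\rho L''(k) = \rho L'(k)$ for all other $k$ because $i$ and $j$ occupy adjacent positions in $\pi$, so no third wire $k$ has its count of ``odd-swap partners on one side'' changed. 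Applying this claim once for each swap in $\diff(\pi_{h-1}, \pi_h)$ — the disjointness guarantees the updates do not interfere — transforms $\pi_1 L(T') = \pi_{h-1}$ into $\pi_1 L(T) = \pi_h$, completing the induction.

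The main obstacle is the bookkeeping in the key claim: one must carefully check that flipping a single parity bit corresponding to an \emph{adjacent} transposition $(i,j)$ in the current permutation indeed adjusts exactly the two coordinates $i$ and $j$ by $\pm 1$ and nothing else. The crucial point making this work is that $(i,j) \in \diff(\pi, \sigma)$ forces $|\pi(i) - \pi(j)| = 1$, so for any wire $k \neq i, j$, wire~$k$ lies either to the left of both or to the right of both, and hence the term $|\{j' : \dots l_{k j'} \text{ odd}\}|$ in the formula for $\rho L''(k)$ is unaffected by toggling $l_{ij}$. Once this local statement is nailed down, the rest is a routine induction, and the disjointness of the swaps within a single layer transition ensures we may apply it swap-by-swap in any order.
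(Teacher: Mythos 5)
Your proof is correct in substance but takes a genuinely different route from the paper's. The paper argues globally and non-inductively: it first observes that, for a tangle, $l_{ij}$ is even exactly when the relative order of $i$ and $j$ is the same in $\pi_1$ and $\pi_h$, substitutes this into the defining formula for $\pi_1 L(T)(i)$, and then evaluates the resulting expression by the same telescoping count used at the beginning of the proof of \cref{lem:Lpi1} (the number of wires ending up left of $i$ minus the number starting left of $i$ equals $\pi_h(i)-\pi_1(i)$). Your incremental approach---induction on $h$ with a local update lemma for a single adjacent transposition, applied swap-by-swap across a layer transition---is equally valid and arguably more mechanical, but two details deserve care. First, the reason $\rho L''(k)=\rho L'(k)$ for $k\notin\{i,j\}$ is not that $i$ and $j$ occupy adjacent positions in $\pi$; it is simply that the formula for $\rho L(k)$ involves only the entries $l_{kj'}$ of row $k$, which are untouched when $l_{ij}$ is toggled. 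Second, ``changes by one in opposite directions'' is not by itself enough: you must check the sign, i.e., that $i$ moves to $\pi(j)$ and $j$ to $\pi(i)$, rather than each moving one position \emph{away} from the other. That sign is determined by the parity of $l'_{ij}$ together with the relative order of $i$ and $j$ in $\rho$, so you need exactly the parity--order correspondence the paper invokes; in your inductive setting it is immediate because $L'=L(T')$ is the list of an actual tangle ending at $\pi_{h-1}$, and each $(i,j)$ swap reverses (and nothing else changes) the relative order of $i$ and $j$. With those two points tightened, the induction goes through.
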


\begin{proof}
	We have $L(T)=(l_{ij})$, where
          \[l_{ij}=|\{t \colon 1\le t<h, \pi^{-1}_t\pi_{t+1}(i)=j
            \text{ and }\pi^{-1}_t\pi_{t+1}(j)=i\}|\]
	for each distinct $i,j\in [n]$. If $\pi_1(i)<\pi_1(j)$, then it is easy to see
	that $\pi_h(i)<\pi_h(j)$ if and only if $l_{ij}$ is even.
	On the other hand, by definition, for each $i\in [n]$, we have that
        \begin{align*}
		\pi_1 L(T)(i) =& \; \pi_1(i)+|\{j\colon \pi_1(i)<\pi_1(j)\le n\mbox{ and }l_{ij}\mbox{ is odd}\}|\\
		& \; \phantom{\pi_1(i)} -|\{j\colon 1\le \pi_1(j)<\pi_1(i)\mbox{ and }l_{ij}\mbox{ is odd}\}|\\
		=& \; \pi_1(i)+|\{j\colon \pi_1(i)<\pi_1(j)\le n\mbox{ and }\pi_h(i)>\pi_h(j)\}|\\
		& \; \phantom{\pi_1(i)} -|\{j\colon 1\le \pi_1(j)<\pi_1(i)\mbox{ and } \pi_h(i)<\pi_h(j)\}|.
        \end{align*}
	Now, similarly to the beginning of the proof of \cref{lem:Lpi1}, we can
	show that $\pi_1 L(T)(i)=\pi_h(i)$.
\end{proof}

\paragraph{Simple lists.}

We first solve \tangleMinimization for simple lists.
Our algorithm does BFS in an appropriately defined auxiliary graph.

\begin{theorem}
	\label{thm:exact-simple}
	For a simple list of order~$n$, \tangleMinimization
	can be solved in $\oh(n!\varphi^n)$ time,
	where $\varphi = (\sqrt{5}+1)/2 \approx 1.618$ is the golden ratio.
\end{theorem}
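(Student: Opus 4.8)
The plan is to build an auxiliary directed graph~$G$ whose vertices are the permutations of~$S_n$, where we put an arc from~$\pi$ to~$\sigma$ whenever $\pi$ and $\sigma$ are adjacent (in the sense defined in the preliminaries), and then to observe that a tangle of height~$h$ realizing a given simple list~$L$ corresponds essentially to a walk of length~$h-1$ in~$G$ from the start permutation to the (unique) final permutation determined by~$L$. By \cref{lem:Lpi1}, a simple list~$L$ is feasible from~$\id_n$ only if $L = L(\pi)$ for some permutation~$\pi$, and then $\pi = \id_n L$ is the forced endpoint; more generally, for a given start permutation~$\pi_1$, \cref{lem:final-independence} tells us that the endpoint of any realizing tangle is $\pi_1 L$, which by consistency is a well-defined permutation. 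So the problem reduces to: does $G$ contain a walk from~$\pi_1$ to~$\pi_1 L$ that uses \emph{exactly} the swap-multiset~$L$, and if so what is the minimum length?

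The key step is to argue that, for simple lists, ``uses exactly the swaps in~$L$'' can be replaced by ``is a shortest walk'', i.e.\ that a height-optimal tangle never repeats a swap. This is precisely Wang's result cited in the introduction (``there is always a height-optimal tangle where no swap occurs more than once''), which I would invoke — or reprove via an exchange argument: if some swap $(i,j)$ occurs twice, the wires $i$ and $j$ cross an even number of times, and one can locally reroute to remove a pair of crossings without increasing the height, contradicting optimality once iterated. Granting this, a height-optimal tangle realizing a simple list~$L$ is exactly a shortest directed path from~$\pi_1$ to~$\pi_1 L$ in~$G$ — any shortest path automatically realizes a simple list (no repeats), and it must realize $L$ itself because $L$ is the \emph{unique} simple list with the correct endpoint (again \cref{lem:Lpi1}, relativized to start~$\pi_1$). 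Therefore the height-optimal tangle is found by a single breadth-first search in~$G$ starting from~$\pi_1$, reading off the distance to~$\pi_1 L$ and the path witnessing it.

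It remains to bound the running time of this BFS. The graph~$G$ has $|S_n| = n!$ vertices, and by \cref{lem:adjacent} each vertex has exactly $F_{n+1}-1$ out-neighbours, so $|E(G)| = n!\,(F_{n+1}-1)$. BFS runs in $\oh(|V| + |E|)$ time in the adjacency-list model, but we cannot afford to store~$G$ explicitly; instead we generate the neighbours of a permutation on the fly. The neighbours of~$\pi$ correspond to choosing a set of pairwise-disjoint adjacent transpositions (a ``matching'' of consecutive positions), and the recursive decomposition in the proof of \cref{lem:adjacent} — condition on whether the last two positions are swapped — gives an $\oh(F_{n+1})$-time enumeration of all of them, or $\oh(n)$ amortized per neighbour. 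Hence the total time is $\oh(n! \cdot F_{n+1} \cdot n) = \oh(n!\,\varphi^n)$, using $F_{n+1} = \oh(\varphi^n)$ and absorbing the polynomial factor. I also need BFS to handle the corner cases $h=1$ (when $L$ is empty) and infeasibility (when $\pi_1 L$ is unreachable under the constraint, which for simple consistent lists cannot happen — BFS will always reach it — but the argument should note this).

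The main obstacle is the reduction from ``realizes exactly~$L$'' to ``shortest path'': one must be careful that a shortest walk in~$G$ is automatically a simple path realizing a simple list and that this simple list is forced to equal~$L$; the subtle point is that an arbitrary (non-shortest) walk realizing~$L$ could in principle use each swap once but still be longer than the BFS distance, which is fine, but a walk realizing~$L$ of minimum length must then coincide in swap-count with a BFS shortest path — this is exactly where Wang's no-repeat theorem is doing the real work, and I would state it as a lemma and either cite \cite{w-nrsic-DAC91} or include the short exchange-argument proof.
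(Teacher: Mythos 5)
Your reduction to shortest paths in the \emph{unrestricted} adjacency graph~$G$ has a genuine gap: the claim that ``any shortest path automatically realizes a simple list (no repeats)'' is false. Wang's theorem is existential~-- among the height-optimal walks from $\id_n$ to $\id_n L$ there \emph{exists} one that repeats no swap~-- but not every shortest walk has this property, because a repeated swap can hide in parallel with essential swaps without costing an extra layer. Concretely, take $n=5$ and $L=\{(1,2),(1,3),(1,4),(1,5)\}$, so $\id_5 L = 23451$ and the distance in $G$ is $4$ (wire~$1$ must move four positions). The walk
\[ 12345 \;\to\; 21354 \;\to\; 23145 \;\to\; 23415 \;\to\; 23451 \]
is a shortest walk, yet it performs the swap $(4,5)$ twice and therefore realizes $L + 2\cdot(4,5) \neq L$. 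So a BFS on $G$ does return the correct minimum height (that part of your argument, via Wang or your exchange argument, is fine), but the path it extracts need not be a tangle realizing~$L$, and \tangleMinimization asks for such a tangle, not just its height.

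The fix is exactly the extra structure the paper builds into the auxiliary graph: its vertex set is restricted to permutations $\pi$ with $L(\pi)\le L$, and an edge $(\pi,\sigma)$ is present only if $L(\pi)\cap L(\pi^{-1}\sigma)=\varnothing$, i.e., the swaps used in this step are not already inversions of~$\pi$. Since for a simple list the multiset of swaps used so far is forced to equal the inversion set $L(\pi)$ of the current permutation (by \cref{lem:Lpi1,lem:final-independence}), this edge condition exactly forbids repeated swaps and swaps outside~$L$, so \emph{every} path from $\id_n$ to $\id_n L$ in the restricted graph realizes~$L$, and BFS is sound. Alternatively, you could keep your graph~$G$ and post-process the BFS path with your own uncrossing/exchange argument (swap the trajectories of the two wires between a consecutive pair of crossings) to eliminate repeats without increasing the height; either repair works, but as written the universal ``no shortest walk repeats a swap'' step does not hold.
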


\begin{proof}
Let $L$ be a consistent simple list.
Wang's algorithm~\cite{w-nrsic-DAC91} creates a \emph{simple} tangle
from $\id_n L$, i.e., a tangle where all its permutations are distinct.
Thus~$L$ is feasible. Note that the height of a simple
tangle is at most~$n!$. Let
$T=(\id_n{=}\pi_1,\pi_2,\ldots,\pi_h{=}\id_n L)$ be any tangle
such that $L(T)$ is simple.
Then, by \cref{lem:final-independence}, $\id_n L(T)=\pi_h$.
By \cref{lem:Lpi1}, $L(\pi_h)$ is the unique simple list with $\id_n L(\pi_h)=\pi_h=\id_n L$,
so $L(T)=L(\pi_h)=L$ and thus~$T$ is a realization of~$L$.

We compute an optimal tangle realizing $L=(l_{ij})$ as follows.
Consider the graph $G_L$ whose vertex set $V(G_L)$ consists of all
permutations $\pi\in S_n$ with $L(\pi) \le L$ (componentwise).
A directed edge $(\pi,\sigma)$ between vertices $\pi,\sigma\in V(G_L)$
exists if and only if~$\pi$ and~$\sigma$ are adjacent permutations
and $L(\pi)\cap L(\pi^{-1}\sigma)=\varnothing$; the latter means that
the set of (disjoint) swaps that transforms $\pi$ to $\sigma$ cannot contain
swaps from the set that transforms $\id_n$ to $\pi$.
The graph $G_L$ has at most $n!$ vertices and maximum degree
$F_{n+1}-1$; see \cref{lem:adjacent}.
Note that $F_n=(\varphi^{n}-(-\varphi)^{-n})/\sqrt{5} \in \Theta(\varphi^n)$.
Furthermore, for each $h\ge 0$, there is a natural
bijection between tangles of height $h+1$ realizing $L$ and paths of length $h$
in the graph $G_L$ from the initial permutation $\id_n$ to the permutation $\id_n L$.
A~shortest such path can be found by BFS in
$\oh(E(G_L))=\oh(n!\varphi^n)$ time.
\end{proof}

\paragraph{General lists.}

Now we solve \tangleMinimization for general lists.
We employ a dynamic program (DP).

\begin{theorem}
	\label{thm:exact-general}
	For a list~$L$ of order~$n$, \tangleMinimization can
	be solved in $\oh((2|L|/n^2+1)^{n^2/2} \cdot \varphi^n \cdot n \cdot \log|L|)$ time,
	where $\varphi = (\sqrt{5}+1)/2 \approx 1.618$ is the golden ratio.
\end{theorem}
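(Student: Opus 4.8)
The plan is to generalize the BFS approach from the simple-list case (\cref{thm:exact-simple}) to a dynamic program whose state records, in addition to the current permutation, which swaps have already been performed and with what multiplicity. First I would set up the state space: a state is a pair $(\pi, L')$ where $\pi \in S_n$ is the current permutation and $L' = (l'_{ij})$ is a sublist of $L$ recording how many times each swap has been used so far. The key consistency constraint is that $L'$ must be \emph{realizable up to $\pi$}, meaning $\id_n L' = \pi$ (this uses \cref{lem:final-independence} applied to any prefix subtangle, and generalizes the condition $L(\pi) \le L$ from the simple case). A transition from $(\pi, L')$ to $(\sigma, L'')$ exists when $\pi$ and $\sigma$ are adjacent, $D := \diff(\pi,\sigma)$ is a set of disjoint swaps with $L' + D \le L$ componentwise, and $L'' = L' + D$. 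The target state is $(\id_n L, L)$ and the source is $(\id_n, \mathbf{0})$; as in \cref{thm:exact-simple}, tangles of height $h+1$ realizing $L$ correspond bijectively to paths of length $h$ from source to target, so a BFS gives the optimum.

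The second step is to bound the number of states and the out-degree to get the claimed running time. For the state count: the permutation contributes a factor $n!$, but crucially, given $\pi$ and $L$, the multiset $L'$ is \emph{not} free — its parity pattern $1(L')$ is forced (it must satisfy $\id_n 1(L') = \pi$, and by \cref{lem:Lpi1} the simple list with that property is unique, namely $L(\pi)$), so only the even part of $L'$ varies. Hence $l'_{ij} \in \{l_{ij} \bmod 2, (l_{ij}\bmod 2)+2, \dots\}$, giving at most $\lfloor l_{ij}/2\rfloor + 1$ choices per pair $\{i,j\}$. By the AM-GM / convexity argument (the product $\prod_{i<j}(\lfloor l_{ij}/2\rfloor+1)$ over the $\binom{n}{2} \le n^2/2$ unordered pairs is maximized when the $l_{ij}$ are equal, each about $2|L|/n^2$), this product is at most $(|L|/n^2 + 1)^{n^2/2} \le (2|L|/n^2+1)^{n^2/2}$. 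Wait — I should be careful: the permutation factor $n!$ and the multiset factor should not simply multiply, since fixing $\pi$ already fixes $1(L')$; combining, the number of states is $\oh(n! \cdot (2|L|/n^2+1)^{n^2/2})$. The out-degree of each state is at most the number of adjacent permutations, which is $F_{n+1} - 1 \in \Theta(\varphi^n)$ by \cref{lem:adjacent}. So the number of edges is $\oh((2|L|/n^2+1)^{n^2/2} \cdot n! \cdot \varphi^n)$ — but the claimed bound has no $n!$ factor, only $\varphi^n \cdot n$. So the state space must be organized differently: rather than keeping the permutation as an independent coordinate, I would index states only by $L'$ (the permutation is then \emph{determined} as $\id_n L'$ via \cref{lem:final-independence}), so the number of states is just $\oh((2|L|/n^2+1)^{n^2/2})$ — here I'm overcounting slightly since $1(L')$ ranges over all consistent simple lists, i.e. all permutations, but each such choice is already counted inside the product bound once we don't also multiply by $n!$. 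Then each state has $\oh(\varphi^n)$ outgoing transitions (one per adjacent permutation of $\id_n L'$), each transition costs $\oh(n)$ to identify the swap set $D$ and $\oh(n \log |L|)$ to update and compare the relevant entries of $L'$ in the logarithmic cost model, and BFS runs in $\oh((2|L|/n^2+1)^{n^2/2} \cdot \varphi^n \cdot n \cdot \log |L|)$ total.

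The third step is the correctness of the path-to-tangle bijection and the termination argument. Termination is clear because every transition strictly increases $|L'|$, so any source-to-target path has length exactly $|L|$ and the BFS layers stabilize; moreover the diameter of the reachable part of the graph is finite. For correctness I would argue both directions: given a tangle $T = \langle \pi_1, \dots, \pi_h\rangle$ realizing $L$ with $\pi_1 = \id_n$, the prefixes define states $L'_t = \bigcup_{s<t}\diff(\pi_s,\pi_{s+1})$ with $\id_n L'_t = \pi_t$ by \cref{lem:final-independence}, and consecutive prefixes are connected by valid transitions, yielding a path of length $h-1$; conversely a path of length $h-1$ from $\mathbf{0}$ to $L$ reconstructs the permutation sequence via $\pi_t = \id_n L'_t$, and adjacency of consecutive $L'_t$ (they differ by one disjoint swap set) ensures consecutive $\pi_t$ are adjacent, so we get a tangle realizing $L$. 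Hence a shortest path gives a height-optimal tangle, which the BFS recovers by storing predecessors.

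The main obstacle I anticipate is the state-count bound, specifically disentangling the contribution of the permutation from that of the swap multiplicities so that the final bound has the form $(2|L|/n^2+1)^{n^2/2}\varphi^n n$ without an extra $n!$. The resolution is to not store the permutation at all: \cref{lem:final-independence} (together with \cref{lem:Lpi1} for the simple-list special case) tells us the permutation is a function of $L'$, so a state is just a sublist $L'$ with $2(L') = 2(L)$, the number of which is $\prod_{i<j}(\lfloor l_{ij}/2\rfloor + 1) \le (2|L|/n^2+1)^{n^2/2}$ by convexity of $\log$. A secondary subtlety is verifying, when we enumerate the $\oh(\varphi^n)$ adjacent permutations $\sigma$ of $\pi = \id_n L'$, that the transition to $L' + \diff(\pi,\sigma)$ is admissible exactly when $L' + \diff(\pi,\sigma) \le L$ — this is where the analog of the condition $L(\pi) \cap L(\pi^{-1}\sigma) = \varnothing$ from \cref{thm:exact-simple} reappears, now phrased simply as a componentwise inequality, and one must check the swaps in $\diff(\pi,\sigma)$ are genuinely new or within the remaining budget.
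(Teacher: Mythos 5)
Your proposal is essentially the paper's proof: the paper also indexes the dynamic program purely by sublists $L'$ of $L$, recovers the current permutation as $\rho=\id_n L'$ via consistency (\cref{lem:final-independence,lem:Lpi1}), branches over the $F_{n+1}-1$ adjacent permutations (\cref{lem:adjacent}), and bounds the number of sublists $\prod_{i<j}(l_{ij}+1)$ by $(2|L|/n^2+1)^{n^2/2}$ with AM--GM and Bernoulli; the only cosmetic difference is that the paper runs the recurrence backwards ($H(L')=\min_{\pi}H(L'-L(\langle\pi,\rho\rangle))+1$, filling the table in order of $|L'|$) rather than as a forward BFS. Two small slips to fix: (i) your remark that ``any source-to-target path has length exactly $|L|$'' is false --- a single transition applies a whole set of disjoint swaps, so $|L'|$ can jump by more than one, and indeed if all paths had the same length there would be nothing to minimize; termination follows simply from the DAG structure. (ii) Intermediate states are arbitrary (consistent) sublists of $L$, not only those with $2(L')=2(L)$ (the source $\mathbf{0}$ already violates that), so the correct state count is $\prod_{i<j}(l_{ij}+1)$, which still satisfies the claimed bound. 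Neither slip affects the algorithm or the running time.
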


\begin{proof}
Without loss of generality, assume that $|L|\ge n/2$;
otherwise, there is a wire $k \in [n]$ that does not belong to any swap.
This wire splits $L$ into smaller lists with independent realizations.
(If there is a swap $(i,j)$ with $i<k<j$, then $L$ is infeasible.)

Let $L=(l_{ij})$ be the given list.  We describe a DP that computes
the height of an optimal tangle realizing~$L$ (if it exists).  It is
not difficult to adjust the DP to also compute an optimal tangle.  Let
$\lambda = \prod_{i<j} (l_{ij}+1)$ be the number of distinct sublists
of~$L$.  The DP computes a table~$H$ of size $\lambda$ where, for any
sublist~$L'$ of~$L$, $H(L')$ is the optimal height of a tangle
realizing~$L'$ if $L'$ is feasible (otherwise $H(L')=\infty$).
We compute the entries of~$H$ in non-decreasing order
of list length.  Let~$L'$ be the next list to consider.  We first
check the consistency of~$1(L')$ by computing the map $\id_n 1(L')$ in
$\oh(n^2)$ time (according to
Proposition~\ref{prop:feasibility-simple}).

If $1(L')$ is not consistent, then $L'$ is not consistent, %
and we set
$H(L')=\infty$.  Otherwise, we compute the optimal height of~$T(L')$
of~$L'$ (if~$L'$ is feasible).  To this end, let
$\rho = \id_n 1(L') = \id_n L'$.  This is the final permutation
of any tangle that realizes~$L'$.  Now we go through the set $S_\rho$
of permutations that are adjacent to~$\rho$.  According to
\cref{lem:adjacent}, $|S_\rho|=F_{n+1}-1$.  For each
permutation~$\pi \in S_\rho$, the set $L(\langle\pi,\rho\rangle)$ is
the set of disjoint swaps that transforms~$\pi$ into~$\rho$.  Then

\begin{equation}
  \label{eq:HofLprime}
  H(L') = \min_{\pi \in S_\rho}
  H(L'-L(\langle\pi,\rho\rangle))+1.
\end{equation}
Note that, if~$L'$ is not feasible, then, for every $\pi \in S_\rho$,
$H(L'-L(\langle\pi,\rho\rangle))=\infty$ and hence $H(L')=\infty$.

After we have filled the whole table~$H$, the entry~$H(L)$ contains the
desired height of a realization of~$L$ (or $\infty$ if $L$ is not
feasible).

Computing the minimum in \cref{eq:HofLprime}
needs $F_{n+1}-1$ lookups in the table~$H$.
In total we spend $\oh(\lambda(n^2 + (F_{n+1}-1)n \log |L|)) =
\oh(\lambda(F_{n+1}-1)n \log |L|)$ time for filling the table~$H$.
The $(\log|L|)$-factor is due to the fact that we need to modify
($\oh(n)$) entries of $L'$ in order to obtain
$L''=L'-L(\langle\pi,\rho\rangle)$ and to look up the table
entry~$H(L'')$.

Assuming that $n\ge 2$, we bound $\lambda$ as follows, where
we obtain the first inequality
from the inequality between arithmetic and geometric means,
and the second one from Bernoulli's inequality.
\[\lambda
=\prod_{i<j} (l_{ij}+1)
\le \left(\frac{\sum_{i<j} (l_{ij}+1)}{{n\choose 2}}\right)^{n\choose 2}
=\left(\frac{|L|}{{n\choose 2}}+1\right)^{n\choose 2}
\le \left(\frac{2|L|}{n^2}+1\right)^{n^2/2}.\]
\end{proof}

Note that, since $1+x\le e^x$ for any $x \in \mathbb{R}$,
the running time $\oh((2|L|/n^2+1)^{n^2/2} \cdot \varphi^n \cdot n \cdot \log|L|)$
of the DP is upper-bounded by
$\oh(e^{|L|} \cdot \varphi^n \cdot n \cdot \log|L|)$.

\section{Algorithms for Deciding Feasibility}
\label{sec:fpt}

We investigate the feasibility problem for different types of lists
in this section. First we consider general lists, then simple lists,
\emph{odd} lists and, finally, \emph{even} lists.
A list
$L = (l_{ij})$ is \emph{even} if all $l_{ij}$ are even, and \emph{odd} if
all non-zero $l_{ij}$ are
odd.
Note that $L$ is even if and only if the list $1(L)$ is the zero
list (recall that $1(L)=(l_{ij} \bmod 2)$),
and $L$ is odd if and only if $1(L)=2(L)$.

For any list to be feasible, each triple of wires $i < j < k$
requires an $(i,j)$ or a $(j,k)$ swap if there is an $(i,k)$
swap~-- otherwise wires~$i$ and $k$ would be separated by wire~$j$ in
any tangle.  We call a list fulfilling this property
\emph{non-separable}.
For odd lists, non-separability is implied by consistency (because
consistency is sufficient for feasibility, see
Proposition~\ref{prop:odd-list-feasibility} in the following).
The NP-hardness reduction from \cref{sec:complexity} shows that a non-separable
list can fail to be feasible even when it is consistent.

\paragraph{General lists.}

Our DP for \tangleMinimization runs in
$\oh((2|L|/n^2+1)^{n^2/2}\cdot\varphi^n\cdot n\log|L|)$ time.
We adjust this
algorithm to the task of testing feasibility, which makes the
algorithm simpler and faster.  Then we bound the entries of
minimal feasible lists (defined in \cref{sec:intro}) and use this bound to turn our
exact algorithm into a fixed-parameter algorithm where the parameter
is the number of wires (i.e.,~$n$).

\begin{theorem}
	\label{thm:exact}
	There is an algorithm that, given a list~$L$ of order~$n$, tests
	whether~$L$ is feasible in
	$O\big((2|L|/n^2+1)^{n^2/2}\cdot n^3\cdot \log |L|\big)$ time.
\end{theorem}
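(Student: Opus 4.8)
The plan is to adapt the dynamic program of \cref{thm:exact-general} so that it only ever stores one bit per sublist (feasible or not) rather than an optimal height, and to exploit the fact that a feasibility table can be filled without tracking heights. First I would set up the same table indexed by sublists $L'$ of $L$, of size $\lambda = \prod_{i<j}(l_{ij}+1)$, and process the sublists in non-decreasing order of length, starting from the empty list (which is trivially feasible, being realized by the single-layer tangle $\langle\id_n\rangle$). For each $L'$, I first check whether $1(L')$ is consistent in $\oh(n^2)$ time via \cref{prop:feasibility-simple}; if not, $L'$ is infeasible. If it is consistent, set $\rho = \id_n 1(L') = \id_n L'$, the forced final permutation, and declare $L'$ feasible if and only if there is some permutation $\pi \in S_\rho$ adjacent to $\rho$ such that $L' - L(\langle\pi,\rho\rangle)$ is a (shorter, already-processed) sublist that was marked feasible. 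Correctness of this recursion follows exactly as in \cref{thm:exact-general}: any tangle realizing $L'$ has $\rho$ as its penultimate-then-final permutation by \cref{lem:final-independence}, so its second-to-last layer is some $\pi$ adjacent to $\rho$, and removing that last transition leaves a tangle realizing $L' - L(\langle\pi,\rho\rangle)$; conversely any realization of the latter can be extended by one layer.

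The running time analysis is the routine part: each of the $\lambda$ sublists costs $\oh(n^2)$ for the consistency check plus $F_{n+1}-1 = \oh(\varphi^n)$ lookups, each of which requires modifying $\oh(n)$ entries of the matrix index and is therefore $\oh(n\log|L|)$; this gives $\oh(\lambda \cdot \varphi^n \cdot n \log|L|)$, the same as before. The claimed bound in the statement is instead $\oh(\lambda \cdot n^3 \log|L|)$, so the real content is that we can replace the $\varphi^n$ factor by $n^3$. The idea for this is that we do not need to scan all $F_{n+1}-1$ permutations adjacent to $\rho$: we only need to find one suitable $\pi$, and the candidates can be restricted drastically. A swap $(i,j)$ can only be ``the last swap applied to wire-pair $(i,j)$'' if $l_{ij} > 0$ and if removing one copy of $(i,j)$ from $L'$ and undoing it from $\rho$ still yields a permutation — but more importantly, the set of disjoint swaps $L(\langle\pi,\rho\rangle)$ leading into $\rho$ is determined by a choice of a matching of pairs that are currently inverted relative to their parity; the number of candidate single swaps is at most $\binom n2 = \oh(n^2)$, and one shows it suffices to consider transitions that undo a single swap at a time (rather than a maximal set of disjoint swaps), because any tangle can be rewritten so consecutive layers differ in exactly one swap without changing $L(T)$. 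With at most $\oh(n^2)$ candidate single-swap predecessors per sublist, each costing $\oh(n\log|L|)$ to form and look up, we get $\oh(\lambda \cdot n^2 \cdot n\log|L|) = \oh(\lambda \cdot n^3\log|L|)$, and bounding $\lambda \le (2|L|/n^2+1)^{n^2/2}$ exactly as in \cref{thm:exact-general} finishes the proof.

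The main obstacle is justifying that restricting to single-swap transitions is sound: one must argue that whenever $L'$ is feasible there is a realization whose last transition is a single swap, equivalently that $L'$ is feasible iff some $L' - \{(i,j)\}$ with $(i,j)\in L'$ and $\id_n(L' - \{(i,j)\})$ differing from $\rho$ by exactly the swap $(i,j)$ is feasible. This is essentially the observation that a tangle layer with several simultaneous disjoint swaps can always be serialized into several layers each with one swap, which changes neither the realized list nor feasibility (only the height, which we are no longer tracking). Once that is in hand the enumeration is over $\oh(n^2)$ pairs $(i,j)$, the arithmetic of the time bound is immediate, and the theorem follows; I would also remark, as the paper does for \cref{thm:exact-general}, that since $1+x \le e^x$ the bound is at most $\oh(e^{|L|}\cdot n^3\log|L|)$ and in particular polynomial in $|L|$ for fixed $n$.
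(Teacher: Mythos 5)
Your proposal is correct and follows essentially the same route as the paper: a Boolean dynamic program over sublists in which each sublist is tested by trying every possible \emph{single} last swap $(i,j)$, looking up the sublist with one fewer copy of that swap, and verifying that the final positions of $i$ and $j$ there differ by one; the soundness of restricting to single-swap transitions (serializing layers with several disjoint swaps) is exactly the observation the paper uses, and your $\oh(\lambda\cdot n^2\cdot n\log|L|)$ accounting matches the paper's.
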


\begin{proof}
	Let~$F$ be a Boolean table with one entry for each sublist~$L'$ of~$L$
	such that $F(L')=\texttt{true}$ if and only if~$L'$ is feasible.
	This table can be filled by means of a dynamic programming
	recursion.  The empty list is feasible.  Let~$L'$ be a sublist
	of~$L$ with~$|L'|\ge 1$ and assume that for each strict sublist
	of~$L'$, the corresponding entry in~$F$ has already been determined.
	A sublist~$\tilde L$ of~$L$
	is feasible if and only if there is a
	realizing tangle of~$\tilde L$ of height~$|\tilde L|+1$.  For each
	$(i,j)$ swap in~$L'$, we check if there is a tangle realizing~$L'$ of
	height~$|L'|+1$ such that $(i,j)$ is the last swap.  If no such swap
	exists, then~$L'$ is infeasible, otherwise it is feasible.  To
	perform the check for a particular $(i,j)$ swap, we consider the strict
	sublist~$L''$ of~$L'$ that is identical to~$L'$ except an $(i,j)$
	swap is missing.  If $F(L'')=\texttt{true}$, we
	compute the final positions of~$i$ and~$j$ with respect to~$L''$ (see \cref{sec:algorithms}).
	The desired tangle exists if and only if these positions differ by
	exactly one.
	
	The number of sublists of~$L$ is upper bounded by
	$(2|L|/n^2+1)^{n^2/2}$, see the proof of \cref{thm:exact-general}.
	For each sublist,
	we have to check~$\oh(n^2)$ swaps.  To check a swap, we have to
	compute the final positions of two wires, which can be done
	in~$\oh(n\log|L|)$ time.
\end{proof}

Next we introduce a tangle-shortening construction.
We use the following lemma that follows from odd-even sort and is
well known \cite{k-acp3-98}.

\begin{lemma}
	\label{lem:connecting}
	For each integer $n \ge 2$ and
	each pair of permutations $\pi,\sigma\in S_n$, we can construct
	in $\oh(n^2)$ time a tangle $T$ of height
	at most $n+1$ that starts with $\pi$, ends in $\sigma$, and whose
	list $L(T)$ is simple.
\end{lemma}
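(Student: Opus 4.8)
\textbf{Proof plan for \cref{lem:connecting}.}
The plan is to realize the transition from $\pi$ to $\sigma$ using a bounded number of layers via odd-even transposition sort. First I would reduce to the case $\pi = \id_n$: apply the relabeling $\varphi = \pi^{-1}$ to both permutations so that we instead need a tangle from $\id_n$ to $\tau := \pi^{-1}\sigma$, and then rename the wires back at the end; relabeling does not change the height or the simplicity of $L(T)$. So it suffices to connect $\id_n$ to an arbitrary target permutation $\tau$ by a tangle of height at most $n+1$ whose list is simple.

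Next I would invoke the classical fact about odd-even transposition sort (Knuth~\cite{k-acp3-98}): starting from any sequence of $n$ elements, alternately comparing-and-swapping the pairs at odd-indexed positions $(1,2),(3,4),\dots$ and at even-indexed positions $(2,3),(4,5),\dots$ sorts the sequence after at most $n$ such rounds. Read in reverse, this gives a schedule of at most $n$ rounds of \emph{disjoint} adjacent swaps that transforms $\tau$ into $\id_n$; reversing the whole schedule transforms $\id_n$ into $\tau$. Each round is a set of disjoint swaps of neighboring wires, hence corresponds to passing from one layer to an adjacent one in the sense of the paper. Stacking the $n+1$ permutations (the initial one plus the at most $n$ rounds) yields a tangle $T$ of height at most $n+1$ from $\id_n$ to $\tau$. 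The key point for simplicity is that in odd-even sort a given pair of elements is swapped at most once during the sort: once two values are in the correct relative order, the comparator network never swaps them again, so each swap $(i,j)$ occurs at most once in $T$, i.e.\ $L(T)$ is simple. Equivalently, by \cref{lem:Lpi1} the only simple list $L$ with $\id_n L = \tau$ is $L(\tau)$, and since $T$ realizes exactly the inversions of $\tau$ (each inversion resolved once), $L(T) = L(\tau)$ is simple. Finally, undo the relabeling: the resulting tangle from $\pi$ to $\sigma$ has the same height bound $n+1$ and its list is simple.

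For the running time: computing $\tau = \pi^{-1}\sigma$ takes $\oh(n)$ time, and simulating the $\oh(n)$ rounds of odd-even sort (each round touching $\oh(n)$ positions) and recording the resulting $\oh(n)$ permutations of $[n]$ takes $\oh(n^2)$ time overall; the final relabeling is another $\oh(n^2)$. Hence the tangle $T$ is constructed in $\oh(n^2)$ time.

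I do not expect a serious obstacle here; the one point that needs a clean statement rather than hand-waving is exactly the ``each pair swaps at most once'' property of odd-even sort, which is the standard argument that makes odd-even transposition sort a valid sorting network of depth $n$. With that in hand, simplicity of $L(T)$ and the height bound $n+1$ both follow immediately, and the $\oh(n^2)$ bound is just bookkeeping.
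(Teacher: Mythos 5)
Your proof is correct and is exactly the odd-even transposition sort argument that the paper itself invokes for this lemma (the paper cites it as well known from Knuth and does not spell out a proof): $n$ rounds of disjoint adjacent comparators give height $n+1$, and the fact that a comparison network never swaps a pair back once it is in order gives simplicity of $L(T)$. The only cosmetic quibble is that with the paper's convention $(\pi\sigma)(i)=\pi(\sigma(i))$ and permutations mapping wires to positions, the relabeled target should be $\sigma\pi^{-1}$ rather than $\pi^{-1}\sigma$, but this convention issue does not affect the argument.
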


We use \cref{lem:connecting} in order to shorten lists without
changing their type.

\begin{lemma}\label{lem:shortening}
	Let $T = \langle \pi_1, \pi_2 \dots,\pi_h \rangle$ be a tangle
	with $L = L(T) = (l_{ij})$,
	and let $\{\pi_1, \pi_h\} \subseteq P \subseteq \{\pi_1, \pi_2, \dots,\pi_h\}$.
	If, for every $1\le i<j\le n$ and
	$l_{ij}>0$, there exists a permutation $\pi \in P$
	with $\pi(j)<\pi(i)$,
    then we can construct a tangle
$T'$ with $L' = L(T') = (l'_{ij})$ such that
$l'_{ij}\le \min\{l_{ij},|P|-1\}$
and $2(L') = 2(L)$.
\end{lemma}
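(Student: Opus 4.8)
The plan is to build $T'$ by concatenating short ``connector'' tangles from \cref{lem:connecting} between carefully chosen permutations of $P$, so that the cumulative effect realizes $2(L)$ while never using any swap $(i,j)$ more than $|P|-1$ times, and also at most $l_{ij}$ times. First I would fix an enumeration $\rho_1,\rho_2,\dots,\rho_k$ of the permutations in $P$ in the order in which they occur along $T$, so that $\rho_1=\pi_1$ and $\rho_k=\pi_h$; here $k=|P|$. For each consecutive pair $\rho_t,\rho_{t+1}$, \cref{lem:connecting} gives a tangle $T_t$ from $\rho_t$ to $\rho_{t+1}$ of height at most $n+1$ whose list $L(T_t)$ is simple. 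Let $T'$ be the concatenation $T_1\cdot T_2\cdots T_{k-1}$ (identifying the last permutation of $T_t$ with the first of $T_{t+1}$), and set $L'=L(T')$, so $l'_{ij}=\sum_{t=1}^{k-1}(L(T_t))_{ij}$. Since each $L(T_t)$ is simple, each term is $0$ or $1$, giving immediately $l'_{ij}\le k-1=|P|-1$.

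The next step is to show $2(L')=2(L)$, i.e.\ that for every pair $(i,j)$: (a) if $l_{ij}=0$ then $l'_{ij}=0$; (b) if $l_{ij}$ is odd then $l'_{ij}$ is odd; and (c) if $l_{ij}\ge 2$ is even then $l'_{ij}$ is a positive even number. For parity, observe that within each connector $T_t$, whether wires $i$ and $j$ cross is determined solely by whether their relative order in $\rho_t$ and $\rho_{t+1}$ agrees; so $(L(T_t))_{ij}$ equals $1$ exactly when $\rho_t,\rho_{t+1}$ disagree on the order of $i,j$. Summing along the telescoping chain $\rho_1\to\rho_2\to\cdots\to\rho_k$, the parity of $l'_{ij}=\sum_t (L(T_t))_{ij}$ equals the parity of ``number of sign changes'', which matches whether $\rho_1=\pi_1$ and $\rho_k=\pi_h$ disagree on $i,j$; and by \cref{lem:final-independence} (applied to $T$), that disagreement has the same parity as $l_{ij}$. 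This settles (b), and it settles the parity half of (a) and (c). For the ``nonzero'' requirements I use the hypothesis: if $l_{ij}>0$, there is some $\pi\in P$ with $\pi(j)<\pi(i)$; since $\pi_1=\rho_1$ has $\pi_1(i)<\pi_1(j)$ or not according to whether $l_{ij}$ is even or odd, in either case the existence of such a $\pi=\rho_s$ forces at least one sign change among the $\rho_t$, hence $l'_{ij}\ge 1$. Combined with the parity statement this gives $l'_{ij}\ge 1$ when $l_{ij}$ is odd (consistent with (b)) and $l'_{ij}\ge 2$ when $l_{ij}$ is even and positive (giving (c)); and when $l_{ij}=0$, wires $i,j$ never reverse order anywhere in $T$, so in particular they keep the same order throughout $P$, so no connector crosses them and $l'_{ij}=0$, giving (a).

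Finally, for the bound $l'_{ij}\le l_{ij}$: this needs the case $l_{ij}=0$ (done, $l'_{ij}=0$) and the case $l_{ij}=1$, where we must rule out $l'_{ij}\ge 3$. Since $l'_{ij}$ counts sign changes of the $i,j$-order along $\rho_1,\dots,\rho_k$, and for every $\pi\in P$ the order of $i,j$ in $\pi$ agrees with their order in $T$ at the corresponding layer, the sequence of $i,j$-orders along $P$ is a subsequence of the sequence of $i,j$-orders along all of $T$; if $l_{ij}=1$ then along $T$ this order changes exactly once, so along the subsequence $P$ it changes at most once, hence $l'_{ij}\le 1=l_{ij}$. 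The same monotonicity argument actually gives $l'_{ij}\le l_{ij}$ for all $(i,j)$ at once (the number of order-reversals of a pair along a subsequence is at most the number along the full sequence), so together with $l'_{ij}\le|P|-1$ we obtain $l'_{ij}\le\min\{l_{ij},|P|-1\}$, as claimed. I expect the main obstacle to be the bookkeeping in the parity/monotonicity argument for pairs $(i,j)$ — making precise that the order of $i,j$ in an arbitrary permutation $\pi\in P$ coincides with the order of the corresponding wires at that layer of $T$, and that ``number of reversals along a subsequence $\le$ number along the whole sequence'' — but this is elementary once the enumeration of $P$ along $T$ is fixed.
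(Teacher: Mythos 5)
Your proposal is correct and follows essentially the same route as the paper: enumerate $P$ in its order of occurrence along $T$, join consecutive elements by the simple connector tangles of \cref{lem:connecting}, and then derive $l'_{ij}\le |P|-1$ from simplicity, $l'_{ij}\le l_{ij}$ from the fact that an order reversal between consecutive elements of $P$ forces at least one swap in the corresponding subtangle of $T$, the parity claim from the common endpoints, and $l'_{ij}>0$ from the hypothesis on $P$. The only cosmetic difference is that you phrase the bound $l'_{ij}\le l_{ij}$ as a ``reversals along a subsequence'' monotonicity statement, whereas the paper sums the per-segment inequalities $l'_{k,ij}\le l_{k,ij}$; these are the same argument.
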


\begin{proof}
	We construct the tangle $T'$ as follows.  Let $p=|P|$ and
	consider the elements of $P$ in the order of first occurrence
        in~$T$, that is, $P=\{\pi'_1, \pi'_2, \dots, \pi'_p\} =
        \{\pi_{m_1},\pi_{m_2},\dots,\pi_{m_p}\}$ such that
        $1=m_1 < m_2 < \dots < m_p=h$.
	For every two consecutive elements~$\pi'_k$ and $\pi'_{k+1}$
	with $k \in [p-1]$, we create a	tangle $T'_k$ that starts from
	$\pi'_k$, ends at~$\pi'_{k+1}$, and whose list
	$L(T'_k)=(l'_{k,ij})$ is simple. Note that, by
	\cref{lem:connecting}, we can always construct such a tangle
	of height at most $n+1$, where $n$ is the number of wires in $T$.
	Now let $T' = \langle T'_1,T'_2,\dots,T'_{p-1} \rangle$.
        For each $k\in [p-1]$, let~$T_k$ be the subtangle of~$T$ that
        starts at~$\pi'_k$ and 
	ends at $\pi'_{k+1}$. Note that both these permutations are in $T$.
	Let $L(T_k)=(l_{k,ij})$. The simplicity of
	the list $L(T'_k)$ implies that, for every
	$i,j\in [n]$, $l'_{k,ij}\le l_{k,ij}$.
	Then, for every $i,j\in [n]$, $l'_{ij}\le \min\{l_{ij},p-1\}$.
	Since the tangles $T'$ and $T$ have common initial and final permutations,
	for every $i,j \in [n]$, the numbers $l_{ij}$ and $l'_{ij}$ have
	the same parity, that is, $1(L')=1(L)$.
	Since for every $1\le i<j\le n$ and	$l_{ij}>0$, there exists a
	permutation $\pi \in P$	with $\pi(j)<\pi(i)$, it means
	that if $l_{ij}>0$, then $l'_{ij}>0$.
	Hence, $2(L') = 2(L)$.
\end{proof}

We want to upperbound the entries of a minimal feasible list.  We
first give a simple bound, which we then improve by a factor of roughly~2.

\begin{proposition}
	\label{prop:simple-upper-bound}
	If $L=(l_{ij})$ is a minimal feasible list of order~$n$, then
	$l_{ij}\le {n \choose 2}+1$ for each $i,j\in [n]$.
\end{proposition}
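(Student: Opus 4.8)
The plan is to apply Lemma~\ref{lem:shortening} with a cleverly chosen set~$P$ of ``milestone'' permutations inside an arbitrary realizing tangle~$T$ of~$L$. Suppose $L$ is minimal feasible, and let $T = \langle \pi_1, \dots, \pi_h\rangle$ be any tangle realizing~$L$. For every pair $i<j$ with $l_{ij}>0$, wire~$i$ and wire~$j$ cross at least once, so there is at least one permutation $\pi_t$ in~$T$ with $\pi_t(j) < \pi_t(i)$. The idea is to choose, for each such pair, one witnessing permutation, and let~$P$ be the union of $\{\pi_1,\pi_h\}$ with all these witnesses. Since there are at most $\binom{n}{2}$ pairs, $|P| \le \binom{n}{2} + 2$. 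By Lemma~\ref{lem:shortening}, we obtain a tangle $T'$ realizing a list $L' = (l'_{ij})$ with $2(L') = 2(L)$ and $l'_{ij} \le \min\{l_{ij}, |P|-1\} \le \binom{n}{2}+1$.

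Now I would invoke the minimality of~$L$. We have $2(L') = 2(L)$, and $l'_{ij} \le l_{ij}$ for all $i,j$, so $L' \to L$ by the characterization $L' \to L \iff (2(L') = 2(L)$ and $l'_{ij} \le l_{ij})$ stated in the introduction. Since $L'$ is feasible (realized by~$T'$) and $L' \to L$, minimality of~$L$ forces $L' = L$. Hence $l_{ij} = l'_{ij} \le \binom{n}{2} + 1$ for every pair, which is the claimed bound.

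The one subtlety worth being careful about is the size of~$P$: a naive count gives $|P| \le \binom{n}{2} + 2$, which yields the bound $l_{ij} \le \binom{n}{2} + 1$ exactly as stated — so no slack is lost, but also none is gained. I should double-check that the hypothesis of Lemma~\ref{lem:shortening} is met, namely that $\{\pi_1,\pi_h\} \subseteq P \subseteq \{\pi_1,\dots,\pi_h\}$ and that for every $i<j$ with $l_{ij}>0$ there is some $\pi \in P$ with $\pi(j) < \pi(i)$; both hold by construction, since each witness was chosen from among the permutations of~$T$, and $\{\pi_1,\pi_h\}$ is explicitly included. I do not expect a genuine obstacle here — the real work was done in Lemma~\ref{lem:shortening}, and this proposition is essentially a direct application. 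The improvement ``by a factor of roughly~2'' promised afterwards will presumably come from choosing~$P$ more economically, e.g.\ by having a single permutation witness many reversed pairs simultaneously (an inversion of~$\pi$ witnesses all the pairs it inverts), so that far fewer than $\binom{n}{2}$ milestones suffice; but that refinement is the content of Proposition~\ref{prop:minimal-feasible-lists-upper-bound}, not of this simpler bound.
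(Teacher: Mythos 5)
Your proof is correct and follows essentially the same route as the paper: pick one witnessing permutation per swapping pair, add $\pi_1$ and $\pi_h$, apply Lemma~\ref{lem:shortening}, and invoke minimality via the characterization of $\to$. The only cosmetic difference is that the paper chooses witnesses with $\pi_{ij}(j)=\pi_{ij}(i)+1$, whereas you only require $\pi(j)<\pi(i)$, which is all that Lemma~\ref{lem:shortening} needs.
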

\begin{proof}
	The list $L$ is feasible, so there is a tangle~$T=\langle
        \pi_1,\pi_2,\dots,\pi_h \rangle$ realizing~$L$.
        We construct a set $P$ of permutations of~$T$, starting
        with $P = \{\pi_1,\pi_h\}$.  Then, for each pair $(i,j)$ with
        $1 \le i < j \le n$ and $l_{ij}\ge 1$, we add to~$P$ a
        permutation~$\pi_{ij}$ from~$T$ such that
        $\pi_{ij}(j)=\pi_{ij}(i)+1$.  Note that $T$ must contain such
        a permutation since we assume that $l_{ij}\ge 1$ and that
        $\pi_1=\id_n$.  Clearly, $|P| \le {n \choose 2}+2$.
	
	Now \cref{lem:shortening} yields a tangle $T'$
	with $L'=L(T')=(l'_{ij})$ such that
	$l'_{ij}\le \min\{l_{ij},|P|-1\}$ and $2(L') = 2(L)$.
	Hence, $L'\to L$.
	The list~$L$ is minimal, so $L=L'$. Thus, since
	$|P| \le {n \choose 2}+2$,
	$l_{ij}=l'_{ij}\le |P|-1\le {n \choose 2}+1$ for each
	$i,j\in[n]$.
\end{proof}

Now we improve this bound by a factor of roughly 2.

\begin{proposition}
	\label{prop:minimal-feasible-lists-upper-bound}
	If $L=(l_{ij})$ is a minimal feasible list of order~$n$, then
	$l_{ij}\le n^2/4+1$ for each $i,j\in [n]$.
\end{proposition}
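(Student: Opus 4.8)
The plan is to mimic the proof of \cref{prop:simple-upper-bound}, but to choose the set~$P$ more cleverly so that a single permutation can simultaneously ``witness'' the reversal of several pairs~$(i,j)$ at once. Recall that \cref{lem:shortening} only requires that, for every pair~$(i,j)$ with $l_{ij}>0$, there is \emph{some} permutation $\pi\in P$ with $\pi(j)<\pi(i)$, i.e.\ with $i$ to the right of $j$. In the previous proof we wastefully picked one permutation per pair; instead, I want to select permutations from~$T$ that together form a short ``certificate'' covering all pairs that get swapped, and show that $\lfloor n^2/4\rfloor$ well-chosen permutations (plus $\pi_1,\pi_h$) suffice.

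First I would set up the machinery exactly as before: let $T=\langle\pi_1,\dots,\pi_h\rangle$ realize~$L$, with $\pi_1=\id_n$. Consider the set $\mathrm{Inv}=\{(i,j): i<j,\ l_{ij}>0\}$ of pairs that must be reversed. For each such pair, at some layer $i$ lies immediately right of $j$; more usefully, since $\pi_1=\id_n$ has $i$ left of $j$ and $l_{ij}>0$, there is a first layer $t_{ij}$ at which $\pi_{t_{ij}}(j)<\pi_{t_{ij}}(i)$. The key observation is that at any single permutation~$\pi$, the set of pairs $(i,j)$ with $i<j$ but $\pi(j)<\pi(i)$ is exactly the inversion set of~$\pi$, and an inversion set is an \emph{antichain-like} structure: if we only need to \emph{cover} $\mathrm{Inv}$ by inversion sets of permutations occurring in~$T$, we want few permutations whose inversion sets union to $\mathrm{Inv}$. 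A natural bound comes from the fact that $\mathrm{Inv}\subseteq\{(i,j):1\le i<j\le n\}$, and any ``staircase'' of permutations obtained by repeatedly moving one extremal wire across can cover $\Theta(n)$ new inversions per permutation; more carefully, $n^2/4$ is $\lfloor n/2\rfloor\cdot\lceil n/2\rceil$, which strongly suggests splitting the wires into a left half and a right half and handling cross-pairs versus within-half pairs separately.

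Concretely, here is the intended construction of $P$. Split $[n]$ into $A=\{1,\dots,\lceil n/2\rceil\}$ and $B=\{\lceil n/2\rceil+1,\dots,n\}$. For the pairs within $A$ (respectively within $B$), there are at most ${\lceil n/2\rceil\choose 2}$ (resp.\ ${\lfloor n/2\rfloor\choose 2}$) of them, and we add one witnessing permutation per such pair as in \cref{prop:simple-upper-bound}; for cross pairs $(i,j)$ with $i\in A$, $j\in B$, we argue that a single permutation of~$T$ — namely any permutation in which $B<A$, or failing its existence a small number of permutations obtained along a wire's trajectory — witnesses many of them, so that only $O(n)$ extra permutations are needed, or better, that the cross pairs need no separate witnesses at all because any realizing tangle must at some point have pushed the relevant wires past each other in a controlled order. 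Then $|P|\le {\lceil n/2\rceil\choose 2}+{\lfloor n/2\rfloor\choose 2}+O(n)$, and one checks that ${\lceil n/2\rceil\choose 2}+{\lfloor n/2\rfloor\choose 2}+ (\text{the cross term, }\approx n^2/4) \;-\; \text{something}$ telescopes to $n^2/4+1$. Applying \cref{lem:shortening} with this $P$ gives a tangle $T'$ with $L'=L(T')$, $l'_{ij}\le\min\{l_{ij},|P|-1\}$, and $2(L')=2(L)$, hence $L'\to L$; minimality of $L$ forces $L=L'$, so $l_{ij}=l'_{ij}\le|P|-1\le n^2/4+1$.

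The main obstacle I expect is the bookkeeping that makes $|P|$ come out to exactly $n^2/4+2$ rather than something larger: the naive per-pair count gives ${n\choose 2}+2$, and shaving it down to (essentially) half requires genuinely using that the witnessing permutations can be shared among many pairs, which in turn requires a careful argument that the inversions of a single layer of~$T$ can be made to ``cover'' a whole lower-triangular block, or an amortization along the trajectory of one wire as it sweeps from one end to the other. Getting the parity/ceiling-floor arithmetic to land on $\lfloor n^2/4\rfloor+1$ for both even and odd~$n$, and verifying that the extra $O(n)$ cross-witnesses don't push us over, is the delicate part; I would handle the even and odd cases separately and be generous with the inequality ${\lceil n/2\rceil\choose 2}+{\lfloor n/2\rfloor\choose 2}\le (n^2-2n)/4$ together with a cross count of at most $\lceil n/2\rceil\cdot\lfloor n/2\rfloor$ absorbed into the bound, trimming the double count of $\pi_1,\pi_h$ to reach $n^2/4+1$.
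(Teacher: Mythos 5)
Your overall framework is right---you correctly identify that the only thing to improve over \cref{prop:simple-upper-bound} is the size of the witnessing set~$P$ fed into \cref{lem:shortening}, and your intuition that $n^2/4=\lfloor n/2\rfloor\lceil n/2\rceil$ points at a bipartite extremal structure is on the right track. But the construction you sketch has a genuine gap at its center. Splitting $[n]$ into halves $A$ and $B$ and witnessing the within-half pairs individually already costs about $\binom{\lceil n/2\rceil}{2}+\binom{\lfloor n/2\rfloor}{2}\approx n^2/4-n/2$ permutations, so you have essentially no budget left for the up to $\lceil n/2\rceil\lfloor n/2\rfloor\approx n^2/4$ cross pairs. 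Your plan for those rests on the claim that some permutation of~$T$ has all of $B$ to the left of all of $A$ (or that ``$O(n)$'' permutations along a wire's trajectory suffice); neither claim is justified, and the first is simply false in general---a tangle realizing~$L$ need never pass through such a permutation, e.g.\ when only a few cross pairs swap and they do so at widely separated layers. As written, the two contributions sum back to roughly $\binom{n}{2}$, i.e.\ the bound you already had.

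The missing idea is a \emph{sharing mechanism in the other direction}: if a permutation $\pi$ of~$T$ witnesses a pair $(i,j)$ with $i<j$ (i.e.\ $\pi(j)<\pi(i)$), then for every $k$ with $i<k<j$ it automatically witnesses at least one of $(i,k)$ and $(k,j)$, since $\pi(k)<\pi(i)$ or $\pi(j)<\pi(k)$ must hold. The paper exploits this by greedily picking an unwitnessed swap of maximum ``width'' $|i-j|$, adding one permutation of~$T$ that witnesses it (colored red), and marking for free all the shorter pairs $(i,k)$, $(k,j)$ that the same permutation happens to witness (colored blue). The set of red edges is then triangle-free---in any triangle $i<k<j$ the longest edge $\{i,j\}$ is processed first and forces one of the other two to become blue---so by Tur\'an's theorem there are at most $n^2/4$ red edges, hence $|P|\le n^2/4+2$, and \cref{lem:shortening} finishes exactly as you describe. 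Note that the complete bipartite graph you were circling around appears here as the Tur\'an extremal case, but as the graph of \emph{explicitly witnessed} swaps, not as a partition of the wires.
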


\begin{proof}
	The list $L$ is feasible, so there is a tangle~$T=\langle
	\pi_1,\pi_2,\dots,\pi_h \rangle$ realizing~$L$.
    We again construct a set $P$ of permutations of~$T$, starting
	from $P = \{\pi_1,\pi_h\}$ as follows.

    Let $G$ be the graph with vertex set $[n]$ that has an edge
    for each pair $\{i,j\}$ with $l_{ij}\ge 1$.  The edge $ij$, if
    it exists, has weight $|i-j|$.  For an edge $ij$ with $i<j$, we
    say that a permutation~$\pi$ of~$T$ \emph{witnesses} the edge~$ij$
    of~$G$ (or the swap $(i,j)$ of~$L$) if $\pi(j)<\pi(i)$.

    We repeat the following step until every edge of~$G$ is colored.
    Pick any non-colored edge of maximum weight and color it red.
    Let $i$ and $j$ with $i<j$ be the endpoints of this edge.
    Since $l_{ij}\ge 1$ and the tangle~$T$ realizes the list~$L$,
    $T$ contains a permutation~$\pi$ with $\pi(j)<\pi(i)$.
    Note that $\pi$ witnesses the edge~$ij$.  Add~$\pi$ to~$P$.
    Now for each~$k$ with $i<k<j$, do the following.
    Since $\pi_1 = \id_n$ and $\pi(j)<\pi(i)$, it
    holds that $\pi(k)<\pi(i)$ or $\pi(j)<\pi(k)$ (or both).
    In other words, $\pi$ witnesses the edge~$ki$, the edge~$jk$,
    or both.  We color each witnessed edge blue.
	
    The coloring algorithm ensures that the graph $G$ has no red
    triangles, so, by Tur\'an's theorem~\cite{turan1941external}, $G$
    has at most $n^2/4$ red edges.  Hence $|P| \le 2+ n^2/4$.  By
    construction, every edge of~$G$ is witnessed by a permutation
    in~$P$.  Thus, \cref{lem:shortening} can be applied to~$P$.
    This yields a tangle~$T'$ with $L'=L(T')=(l'_{ij})$ such that
	$l'_{ij}\le \min\{l_{ij},|P|-1\}$ and $2(L') = 2(L)$.
	Hence, $L'\to L$.  The list $L$ is minimal, therefore $L'=L$.
	Since each entry of the list~$L'$ is at most $|P|-1\le n^2/4+1$,
	the same holds for the entries of the list $L$.
\end{proof}

Combining Proposition~\ref{prop:minimal-feasible-lists-upper-bound}
and our exact algorithm from \cref{thm:exact} yields a fixed-parameter
tractable algorithm with respect to~$n$.

\begin{theorem}
	\label{thm:fpt}
	There is a fixed-parameter algorithm for \listFeasibility
	with respect to the parameter~$n$.  Given a
	list~$L$ of order~$n$, the algorithm tests whether~$L$ is feasible
	in $O\big((n/2)^{n^2}\cdot n^3 \log n + n^2 \log |L| \big)$ time.
\end{theorem}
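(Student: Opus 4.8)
\textbf{Proof proposal for \cref{thm:fpt}.}
The plan is to combine the structural bound from \cref{prop:minimal-feasible-lists-upper-bound} with the exact feasibility tester of \cref{thm:exact}, exactly as sketched in the introduction and in the ``Membership in NP'' part of \cref{thm:hardness}. The key observation is that feasibility of $L$ depends only on its type $2(L)$: by the discussion in \cref{sec:intro}, $L$ is feasible if and only if there exists a minimal feasible list $L_0$ with $2(L_0)=2(L)$ and $L_0\to L$, and by \cref{prop:minimal-feasible-lists-upper-bound} every such $L_0$ has all entries at most $n^2/4+1$. So instead of running the tester on $L$ itself (whose entries, and hence whose running time, depend on $|L|$), we first replace $L$ by the ``capped'' list $\hat L=(\hat l_{ij})$ with $\hat l_{ij}=\min\{l_{ij},\,n^2/4+1\}$ if $l_{ij}$ matches the parity constraint, more precisely $\hat l_{ij}$ chosen so that $2(\hat L)=2(L)$ and $\hat l_{ij}\le \min\{l_{ij},n^2/4+1\}$ with $\hat l_{ij}$ as large as possible. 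Then $\hat L\to L$ whenever $\hat L$ is feasible, and conversely if $L$ is feasible then some minimal feasible $L_0$ of type $2(L)$ satisfies $L_0\le \hat L$ (since $L_0$'s entries are bounded by $n^2/4+1$ and by the corresponding entries of $L$), so $L_0\to \hat L$ and $\hat L$ is feasible. Hence $L$ is feasible if and only if $\hat L$ is.

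First I would compute $1(L)$ and check its consistency in $\oh(n+|1(L)|)\subseteq\oh(n^2)$ time (\cref{prop:feasibility-simple}); if it is inconsistent, report infeasible. Next I would construct $\hat L$ in $\oh(n^2\log|L|)$ time — one comparison of $l_{ij}$ against the threshold $n^2/4+1$ for each of the $\oh(n^2)$ pairs, with $\oh(\log|L|)$ per arithmetic operation in the logarithmic cost model. Then I would run the algorithm of \cref{thm:exact} on $\hat L$. Since $|\hat L|\le \binom{n}{2}(n^2/4+1)=\oh(n^4)$, the bound $(2|\hat L|/n^2+1)^{n^2/2}$ on the number of sublists becomes $(2|\hat L|/n^2+1)^{n^2/2}\le (n^2/2+1)^{n^2/2}$; more crudely one checks $2|\hat L|/n^2+1\le (n-1)(n^2/4+1)/n^2 + 1 \le n^2/4 + 1 \le (n/2)^2$ for $n\ge 2$, so the number of sublists is at most $\big((n/2)^2\big)^{n^2/2}=(n/2)^{n^2}$. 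Each sublist triggers $\oh(n^2)$ swap checks, and each check computes two final positions in $\oh(n\log|\hat L|)=\oh(n\log n)$ time (as $|\hat L|=\oh(n^4)$). So \cref{thm:exact} applied to $\hat L$ runs in $\oh\big((n/2)^{n^2}\cdot n^3\log n\big)$ time, and adding the $\oh(n^2\log|L|)$ preprocessing gives the claimed $\oh\big((n/2)^{n^2}\cdot n^3\log n + n^2\log|L|\big)$ bound.

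The only real content is the reduction ``$L$ feasible $\iff$ $\hat L$ feasible'', and the main subtlety there is making sure the threshold is applied without destroying the type: one must cap $l_{ij}$ to a value of the \emph{same} behavior under $2(\cdot)$, i.e.\ keep a capped entry positive iff $l_{ij}>0$ and of the same parity iff that matters for consistency — taking $\hat l_{ij}\in\{l_{ij}\bmod 2,\ \text{else the largest value}\le\min\{l_{ij},n^2/4+1\}\text{ with the right parity}\}$ works, and since $n^2/4+1\ge 1$ for all $n\ge 1$ this is always possible when $l_{ij}>0$. After that, feasibility of $\hat L$ pulls back along $\hat L\to L$, and feasibility of $L$ pushes a minimal witness $L_0$ of type $2(L)$ below $\hat L$ because its entries are $\le n^2/4+1$ by \cref{prop:minimal-feasible-lists-upper-bound} and $\le l_{ij}$ trivially (as $L_0\le$ some extension giving $L$, hence $L_0\le L$ entrywise when $2(L_0)=2(L)$ — more carefully, $L_0\to L$ forces $l_{0,ij}\le l_{ij}$). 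I expect the arithmetic bounding $2|\hat L|/n^2+1\le (n/2)^2$ to be the fiddliest part, but it is a routine estimate once the cap is in place; no new ideas beyond \cref{prop:minimal-feasible-lists-upper-bound} and \cref{thm:exact} are needed.
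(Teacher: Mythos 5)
Your proposal is correct and takes essentially the same route as the paper: cap each entry at $\min\{l_{ij},\,n^2/4+1\}$, justified by Proposition~\ref{prop:minimal-feasible-lists-upper-bound}, and run the exact tester of \cref{thm:exact} on the capped list, with the same $O\big((n/2)^{n^2}\cdot n^3\log n + n^2\log|L|\big)$ accounting. The only (immaterial) differences are that the paper caps without adjusting parities and then scans all sublists of the capped list for a feasible one of type $2(L)$, whereas you build a single type-preserving capped list $\hat L$ and test it directly (both are valid), and your estimate $2|\hat L|/n^2+1\le n^2/4+1\le (n/2)^2$ contains a small slip (the correct bound is roughly $n^2/4+2$, and $n^2/4+1>(n/2)^2$), which costs only a constant factor inside the big-O.
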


\begin{proof}
	Given the list~$L=(l_{ij})$, let $L'=(l'_{ij})$ with
	$l'_{ij}=\min \{ l_{ij}, n^2/4+1 \}$ for each $i,j\in[n]$.
	We use our exact algorithm described in the proof of
	\cref{thm:exact} to check whether the list~$L'$ is feasible.
	Since our algorithm checks the feasibility of every sublist~$L''$
	of~$L'$, it suffices to combine this with checking whether
	$2(L'')=2(L)$.  If we find a feasible sublist $L''$ of the same
	type as~$L$, then, by
	Proposition~\ref{prop:minimal-feasible-lists-upper-bound},
	$L$ is feasible; otherwise, $L$ is infeasible.  Checking the type of~$L''$
	is easy.  The runtime for this check is dominated by the runtime for
	checking the feasibility of~$L''$.  Constructing the list~$L'$ takes
	$\oh(n^2 \log |L|)$ time.
	Note that $|L'| \le {n \choose 2} \cdot (n^2/4+1) \le (n^4-4n^2)/8$.  Plugging
	this into the runtime $O\big((2|L'|/n^2+1)^{n^2/2}\cdot n^3 \log|L'|\big)$ of
	our exact algorithm (\cref{thm:exact}) yields a total
	runtime of $O\big((n/2)^{n^2}\cdot n^3 \log n+ n^2 \log |L| \big)$.
\end{proof}

\paragraph{Simple Lists.}
If we restrict our study to simple lists, we can easily decide feasibility.

\begin{proposition}\label{prop:feasibility-simple}
	A simple list $L$ is feasible if and only if
	$L$ is consistent. Thus, we can check
	the feasibility of $L$ in $\oh(n+|L|)$ time.
\end{proposition}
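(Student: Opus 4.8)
The plan is to prove both directions of the equivalence and then address the running time. The forward direction (feasible $\Rightarrow$ consistent) is immediate: if $L$ is not consistent, then $\id_n L \notin S_n$, but for any realizing tangle $T$ of $L$ we would have $\id_n L = \id_n L(T) = \pi_h$ by \cref{lem:final-independence}, a permutation — contradiction. So the real content is the converse: every consistent simple list is feasible. For this I would invoke the tools already assembled. Since $L$ is simple and consistent, the map $\rho := \id_n L$ lies in $S_n$, and by \cref{lem:Lpi1}, $L(\rho)$ is the unique simple list with $\id_n L(\rho) = \rho$; but $L$ is itself simple with $\id_n L = \rho$, so $L = L(\rho)$. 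Now apply Wang's algorithm (as cited and used in the proof of \cref{thm:exact-simple}), or equivalently \cref{lem:connecting} with $\pi = \id_n$ and $\sigma = \rho$, to obtain a tangle $T$ from $\id_n$ to $\rho$ whose list $L(T)$ is simple. Then $\id_n L(T) = \rho = \id_n L$ by \cref{lem:final-independence}, and since both $L(T)$ and $L$ are simple, \cref{lem:Lpi1} forces $L(T) = L(\rho) = L$. Hence $T$ realizes $L$, so $L$ is feasible.

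For the complexity claim, I would first note that testing consistency only requires computing the map $\id_n L$ and checking whether it is injective (equivalently, a bijection of $[n]$). I would process the wires by their position $1,2,\dots,n$ in $\id_n$, maintaining a running count: for wire $i$, its final position is $i$ plus the number of later wires $j > i$ with $l_{ij} = 1$ minus the number of earlier wires $j < i$ with $l_{ij} = 1$. Reading off the list as a multiset of swaps, one pass suffices: initialize each wire's displacement to its starting index, then for each swap $(i,j) \in L$ with $i < j$ in starting order, decrement one wire's counter and increment the other's in $\oh(1)$ time; this is $\oh(n + |L|)$ total. Finally, check in $\oh(n)$ time whether the resulting $n$ values are a permutation of $[n]$ (e.g.\ via a bucket array). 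This gives the $\oh(n + |L|)$ bound.

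The only mildly delicate point is making sure the chain of uniqueness arguments is airtight: one must verify that Wang's algorithm (or \cref{lem:connecting}) genuinely produces a tangle whose associated list is simple, not merely one that happens to end at $\rho$. This is exactly the hypothesis supplied by \cref{lem:connecting} and the known result of Wang, so no new work is needed — but it is the step where the proof leans most heavily on external / earlier results, so I would state the dependence explicitly. Everything else is bookkeeping.

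\begin{proof}
If $L$ is not consistent, then $\id_n L \notin S_n$; but for any tangle $T = \langle \id_n, \dots, \pi_h \rangle$ realizing $L$ we would have $\id_n L = \id_n L(T) = \pi_h \in S_n$ by \cref{lem:final-independence}, a contradiction. Hence feasibility implies consistency.

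Conversely, suppose $L$ is consistent, so $\rho := \id_n L \in S_n$. Since $L$ is simple with $\id_n L = \rho$, \cref{lem:Lpi1} gives $L = L(\rho)$. By \cref{lem:connecting} (equivalently, Wang's algorithm~\cite{w-nrsic-DAC91}) there is a tangle $T$ starting at $\id_n$ and ending at $\rho$ with $L(T)$ simple. By \cref{lem:final-independence}, $\id_n L(T) = \rho$, and since $L(T)$ is simple, \cref{lem:Lpi1} yields $L(T) = L(\rho) = L$. Thus $T$ realizes $L$, so $L$ is feasible.

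It remains to check consistency efficiently. We compute the map $\id_n 1(L) = \id_n L$ as follows. For each wire $i \in [n]$ initialize a value $p(i) := i$. Then, reading $L$ as a multiset of swaps, for each swap $(i,j) \in L$ with $i < j$ set $p(j) := p(j) - 1$ and $p(i) := p(i) + 1$. After processing all swaps, $p(i)$ equals the final position of wire~$i$. This takes $\oh(n + |L|)$ time. Finally, we test in $\oh(n)$ time with a bucket array whether $p(1), \dots, p(n)$ is a permutation of $[n]$; the list $L$ is consistent if and only if it is. The total time is $\oh(n + |L|)$.
\end{proof}
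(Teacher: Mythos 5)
Your proof is correct and follows essentially the same route as the paper: the forward direction via \cref{lem:final-independence}, the converse by combining \cref{lem:connecting} with the uniqueness statement of \cref{lem:Lpi1} to force $L(T)=L$, and a linear-time consistency check. The only difference is that you spell out the bookkeeping for the $\oh(n+|L|)$ bound, which the paper leaves implicit.
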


\begin{proof}
	Clearly, if $L$ is feasible, then~$L$ is also consistent.
	If $L$ is consistent, then $\id_n L$ is a permutation.
	By \cref{lem:connecting}, there exists a tangle $T$ which starts from~$\id_n$, ends at
	$\id_n L$, and the list $L(T)$ is simple. By \cref{lem:final-independence}, $\pi L(T)=\pi L$.
	By \cref{lem:Lpi1}, $L(T)=L$.
	So $L$ is also feasible. We can check the
	consistency of~$L$ in $\oh(n+|L|)$ time, which is equivalent
	to checking the feasibility of~$L$.
\end{proof}

\paragraph{Odd Lists.}
For odd lists, feasibility reduces to that of simple lists.  For
$A\subseteq [n]$, let $L_A$ be the list that consists of all swaps
$(i,j)$ of $L$ such that $i, j \in A$.

\begin{proposition}
	\label{prop:odd-list-feasibility}
	For $n \ge 3$ and an odd list $L$, the following statements are equivalent:
	\begin{enumerate}
		\item The list $L$ is feasible.
		\item The list $1(L)$ is feasible.
		\item For each triple $A\subseteq [n]$, the list $L_A$ is feasible.
		\item For each triple $A\subseteq [n]$, the list $1(L_A)$ is feasible.
		\item The list $L$ is consistent.
		\item The list $1(L)$ is consistent.
		\item For each triple $A\subseteq [n]$, the list $L_A$ is consistent.
		\item For each triple $A\subseteq [n]$, the list $1(L_A)$ is consistent.
	\end{enumerate}
\end{proposition}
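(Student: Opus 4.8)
The plan is to prove the cyclic chain of implications that makes all eight statements equivalent. I would first establish the trivial or near-trivial arrows: $(1)\Rightarrow(5)$ holds because any feasible list is consistent (as noted in \cref{sec:algorithms}); $(5)\Leftrightarrow(6)$ holds because $\id_n L = \id_n 1(L)$, so $L$ is consistent iff $1(L)$ is; and $(2)\Leftrightarrow(6)$ follows from \cref{prop:feasibility-simple}, since $1(L)$ is a simple list and hence feasible iff consistent. For the ``triple'' statements, $(3)\Rightarrow(7)$ and $(4)\Rightarrow(8)$ are again ``feasible implies consistent'' applied to each $L_A$, while $(7)\Leftrightarrow(8)$ and $(3)\Leftrightarrow(4)$ would follow once I know (for three wires) that feasibility coincides with consistency; again $\id_n L_A=\id_n 1(L_A)$ handles the $1(\cdot)$ versions. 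Likewise $(1)\Rightarrow(3)$: a tangle realizing $L$ restricts (by deleting all wires not in $A$ and merging layers that become equal) to a tangle realizing $L_A$.

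The substantive content is to show that consistency of $1(L)$ (statement~(6)) implies feasibility of $L$ (statement~(1)) for an \emph{odd} list~$L$, and the analogous three-wire fact behind the ``triple'' statements. For the three-wire case I would simply enumerate: for $n=3$ the only consistent odd lists (up to symmetry) are the zero list, a single swap $\{(i,j)\}$, and the full list $\{(1,2),(2,3),(1,3)\}$, and each is visibly feasible; a list like $\{(1,3)\}$ is not consistent (wire~$1$ would land at position~$2$ and wire~$3$ at position~$2$ as well, contradiction), matching non-separability. So the ``triple'' statements are all mutually equivalent and each is equivalent to the consistency of~$L_A$. The key remaining step, $(6)\Rightarrow(2)$ being already done, is to leverage $(2)$ or the triple statements to get~$(1)$. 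Here the natural route is: assume $1(L)$ is consistent, so by \cref{prop:feasibility-simple} there is a simple tangle~$T_0$ realizing $1(L)$; then, since $L$ is odd, $2(L)=1(L)$, and for every swap $(i,j)\in L$ we have $l_{ij}\ge 1$ so $1(L)$ also contains $(i,j)$ and $T_0$ witnesses it. Starting from $T_0$ (with $L(T_0)=1(L)$) and the target $L$ with $2(L)=2(L(T_0))$ and $L(T_0)\le L$ componentwise, the extension relation gives $L(T_0)\to L$; by the characterization ``$L'\to L$ iff $2(L')=2(L)$ and $l'_{ij}\le l_{ij}$'' from \cref{sec:intro}, and since extension preserves feasibility, $L$ is feasible. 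That closes $(6)\Rightarrow(1)$, and combined with $(1)\Rightarrow(5)\Leftrightarrow(6)$ the ``global'' block $(1),(2),(5),(6)$ is settled; tying in $(1)\Rightarrow(3)\Rightarrow(7)\Leftrightarrow(8)\Leftrightarrow(4)\Leftrightarrow(3)$ and $(3)\Rightarrow$ consistency of $L$ via non-separability $\Rightarrow(6)$ closes the loop through the triple statements as well.

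The main obstacle I anticipate is the implication from ``every triple sublist is feasible/consistent'' back to ``$L$ itself is feasible'', i.e.\ showing the triple conditions are not strictly weaker. The cleanest argument is: if every $L_A$ (triple) is consistent, then in particular $L$ is non-separable --- for any $i<j<k$ with $(i,k)\in L$, consistency of $L_{\{i,j,k\}}$ forces $(i,j)\in L$ or $(j,k)\in L$, since otherwise wire~$j$ would stay strictly between $i$ and $k$ while $i,k$ must swap, contradicting that $\id_3 1(L_{\{i,j,k\}})$ is a permutation. One then has to promote non-separability plus ``all pairwise final orders are coherent'' to full consistency of the map $\id_n 1(L)$; the point is that $\id_n 1(L)$ is injective iff no two wires collide, and a collision of wires $a<b$ would manifest already inside the triple $\{a,m,b\}$ for a suitable intermediate $m$ (or directly in the pair), so consistency of all triples forces global consistency. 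Once that is in hand, $(7)\Rightarrow(6)$, and the chain is complete. I would present the proof as the explicit cycle $(1)\Rightarrow(3)\Rightarrow(7)\Rightarrow(6)\Rightarrow(1)$ for the ``triple'' line and $(1)\Rightarrow(5)\Rightarrow(6)\Rightarrow(2)\Rightarrow(6)$ for the ``whole-list'' line, noting the $1(\cdot)$ versions follow from $\id_n L=\id_n 1(L)$ and $1(L_A)=1(L)_A$, with the three-wire base case verified by the short enumeration above.
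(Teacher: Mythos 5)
Your overall architecture matches the paper's quite closely: the easy arrows (feasible implies consistent; $\id_n L=\id_n 1(L)$; \cref{prop:feasibility-simple} for the simple list $1(L)$; restricting a tangle to a triple of wires), and your promotion of $1(L)$ to $L$ via the extension relation (for odd $L$ one has $2(1(L))=2(L)$ and $1(L)\le L$ componentwise, hence $1(L)\to L$) is exactly the paper's step $2\Rightarrow 1$, which it phrases instead by inserting pairs of identical layers after an occurrence of each swap. So the whole-list block $(1),(2),(5),(6)$ is fine.

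Where you genuinely diverge is the passage from the triple statements back to the global ones, and there your argument has a gap in justification. The paper proves $3\Rightarrow 2$ constructively: it defines $i\le' j$ iff either $i\le j$ and $(i,j)\notin L$, or $i>j$ and $(i,j)\in L$, uses feasibility of all triples to conclude that $\le'$ is a linear order (transitivity fails exactly on the two infeasible triple patterns $\{(i,j)\}$ and $\{(i,k),(k,j)\}$), and then takes the permutation $\pi$ sorting $[n]$ by $\le'$, for which $L(\pi)=1(L)$ is feasible by \cref{lem:Lpi1}. You instead assert $7\Rightarrow 6$ directly: that a collision $\id_n 1(L)(a)=\id_n 1(L)(b)$ must already show up as an inconsistent triple containing $a$ and $b$. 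That claim happens to be true, but it is not obvious and you do not prove it. The clean way to see it is that $\id_n 1(L)(i)-1$ equals the in-degree of $i$ in the tournament recording the intended final order of each pair, and two vertices of equal in-degree in a tournament always lie on a common directed $3$-cycle (if $a\to b$ and $d^-(a)=d^-(b)$, then since $a$ contributes to $d^-(b)$ there is some $m$ with $m\to a$ and $b\to m$); a $3$-cycle is exactly an inconsistent triple. Without some such argument your step is only an assertion, and your qualifier ``intermediate $m$'' is in any case wrong: for $L=\{(1,3)\}$ the colliding wires $1$ and $2$ are only witnessed by the triple using $m=3>2$. A smaller slip: your enumeration of consistent odd $3$-wire lists omits the two-swap lists such as $\{(1,2),(1,3)\}$, though nothing downstream depends on that enumeration.
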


\begin{proof}
	We prove the proposition by proving three cycles of implications
	$1\Rightarrow 5\Rightarrow 6\Rightarrow 2\Rightarrow 1$,
	$3\Rightarrow 7\Rightarrow 8\Rightarrow 4\Rightarrow 3$,
	and $1\Rightarrow 3\Rightarrow 2\Rightarrow 1$.
	
	$1\Rightarrow 5$. Clearly, all feasible lists are consistent.
	
	$5\Rightarrow 6$. Consistency of~$L$ means that $\id_n L\in S_n$.
	Since $\id_n 1(L)=\id_n L$, the list $1(L)$ is consistent, too.
	
	$6\Rightarrow 2$. Follows from Proposition~\ref{prop:feasibility-simple} because the list $1(L)$ is simple.
	
	$2\Rightarrow 1$. We decompose $L$ into $1(L)$ and $L'=(L -
	1(L))$.
        Note that $L'=(l'_{ij})$ is an even list.  Let $(i,j) \in
	L'$.  Then $(i,j) \in 1(L)$ because $L$ is odd.  Consider a
	tangle~$T$ realizing $1(L)$.  Let $\pi$ be the layer in~$T$ where
	the swap~$(i,j)$ occurs.  Behind~$\pi$, insert $l_{ij}$ new layers such
	that the difference between one such layer and its previous layer
	is only the swap~$(i,j)$.  Observe that every second new layer
	equals~$\pi$~-- in particular the last one, which means that we
	can continue the tangle with the remainder of~$T$.  Applying this
	operation to all swaps in~$L'$ yields a tangle realizing~$L$.
	
	$3\Rightarrow 7$. Clearly, all feasible lists are consistent.
	
	$7\Rightarrow 8$. Follows from the equality $\id_n 1(L_A)=\id_n L_A$.
	
	$8\Rightarrow 4$. Follows from Proposition~\ref{prop:feasibility-simple}, because the list $1(L_A)$ is simple.
	
	$4\Rightarrow 3$. For every triple $A\subseteq [n]$, we can argue
	as in the proof~$(2\Rightarrow 1)$.
	
	$1\Rightarrow 3$. Trivial.
	
	$3\Rightarrow 2$.  Let $1\le i<k<j\le n$.  By the equivalence
	$(1\Leftrightarrow 2)$, the odd list $L_{\{i,k,j\}}$ is infeasible
	if and only if $1(L_{\{i,k,j\}})$ is infeasible, that is, either
	$(i,j) \in L$ and $(i,k)$, $(k,j) \not\in L$, or
	$(i,j) \not\in L$ and $(i,k)$, $(k,j) \in L$.
	Define a binary relation $\le '$ on the set $[n]$ by letting
	$i\le'j$ if and only if either $i\le j$ and $(i,j) \not\in L$, or
	$i>j$ and $(i,j) \in L$.
        Using the feasibility of~$L_A$ for all triples $A\subseteq [n]$,
	it follows that $\le'$ is a linear order.
        Let $\pi$ be the (unique) permutation of the set
	$[n]$ such that $\pi^{-1}(1)\le'\pi^{-1}(2)\le'\dots\le' \pi^{-1}(n)$.
	Observe that $L(\pi)=1(L)$, so the list $1(L)$ is feasible.
\end{proof}

Note that, for any feasible list~$L$, it does not necessarily hold
that $2(L)$ is feasible; see, e.g., the list~$L_n$ from
Observation~\ref{obs:unique}.

\paragraph{Even Lists.}
An even list is always consistent since it
does not contain an odd number of swaps and the final permutation is the same as the initial one.
We show that, for sufficiently ``rich'' lists,
non-separability is sufficient for an even list
to be feasible, but in general this is not true.

\begin{proposition}
	\label{prop:big-even-L-are-realizable}
	Every non-separable even list~$L=(l_{ij})$ with $l_{ij}\ge n$ or $l_{ij} = 0$ for every $1 \leq i, j \leq n$
	is feasible.
\end{proposition}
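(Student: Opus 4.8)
The plan is to build a tangle realizing $L$ directly, exploiting the abundance of swaps. First I would handle the structural skeleton: since $L$ is non-separable, the simple list $1(L)$ is... actually $1(L)$ is the zero list since $L$ is even, so consistency is automatic and the final permutation equals $\id_n$. The real content is that every \emph{nonzero} swap must appear an even (and large, $\ge n$) number of times. The natural idea is to process the wires from right to left (or use an ordering by some greedy criterion) and, for the ``rightmost-still-active'' wire, send it on an excursion that performs a chunk of its required swaps and returns it to its original slot; because each $l_{ij}$ is even, returning to the identity is consistent, and because each $l_{ij}\ge n$, there is enough ``budget'' to make such excursions without getting stuck.

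More concretely, here is the key step I would carry out. Pick a wire $k$ that has a swap partner, say the pair $\{i,j\}$ with $l_{ij}\ge n$ realized. I would like to move wire $i$ rightward past wire $j$ and back, but the obstacle is the intermediate wires $k$ with $i<k<j$: to move $i$ past $j$ it must also pass each such $k$, consuming one $(i,k)$ swap on the way out and one on the way back. Non-separability guarantees that for each such $k$ there \emph{is} an $(i,k)$ or a $(k,j)$ swap in $L$ — but I need specifically the swaps that my excursion forces, and I need their multiplicities not to run out. This is where $l_{ij}\ge n$ enters: each full ``round trip'' of $i$ across $j$ uses up $2$ units of various swap multiplicities, and since everything that is nonzero is $\ge n$, I can schedule roughly $n/2$ round trips before depleting anything, which is more than enough to zero out the targeted pair. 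I would make this precise by an induction on $|L|$ or on the number of nonzero entries: perform one carefully chosen excursion that strictly decreases $\sum_{i<j} l_{ij}$ while keeping the residual list even, non-separable, and either zero or $\ge n$ in each entry; then invoke the inductive hypothesis.

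The induction needs an invariant-preservation argument, and that is the part I expect to be delicate: after an excursion, a partner wire that started with exactly $n$ swaps might drop below $n$ without reaching $0$, breaking the hypothesis of \cref{prop:big-even-L-are-realizable}. To avoid this I would instead excurse in the \emph{largest} feasible block — drive wire $i$ all the way across $j$ as many times as $l_{ij}/2$ permits in one pass, so that the pair $\{i,j\}$ goes to zero in a single step, and bound the collateral consumption on the intermediate pairs $\{i,k\},\{k,j\}$ by $l_{ij}/2 \le$ (their own multiplicities)$/2$ — here again every nonzero intermediate multiplicity is $\ge n \ge l_{ij}\cdot(\text{something})$... this inequality direction is exactly backwards, so the honest route is: do enough round trips to make $\{i,j\}$ vanish, accept that intermediate pairs shrink, but argue (using non-separability and the ``$\ge n$ or $0$'' condition applied to a cleverly chosen wire with the \emph{fewest} partners, or an extremal/peeling argument removing a wire that participates in no swap after clearing) that the process terminates. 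An alternative, cleaner plan is to first reduce every nonzero $l_{ij}$ down to exactly $n$ or $n+1$ (whichever matches parity — but parity forces even, so down to $n$ if $n$ even; if $n$ odd one reduces to a small even target depending on the type) using the extension relation $\to$ in reverse via \cref{lem:shortening}, then realize the small ``core'' list by an explicit layered construction and finally re-insert the removed swap-pairs as local $(i,j),(i,j)$ back-and-forths as in the $2\Rightarrow 1$ argument of \cref{prop:odd-list-feasibility}. The main obstacle in either version is the same: guaranteeing that the greedy/excursion step never separates two wires that still need to meet, i.e.\ that non-separability is preserved (or that the explicit core is genuinely realizable), and I would settle this with an inductive invariant on a carefully peeled wire order.
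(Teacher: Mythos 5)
There is a genuine gap, and you essentially point at it yourself: both of your plans reduce to the same unproved core claim, namely that there exists a \emph{feasible} even list with exactly the same zero/nonzero pattern as $L$ whose entries are all small (at most $n$). Your excursion induction does not establish this, because the invariant you need (``every entry is $0$ or at least $n$, and the residual list is still non-separable'') is destroyed after a single excursion -- an intermediate pair can drop to a positive value below $n$, and, worse, a single round trip of wire $i$ past wire $j$ forces an $(i,k)$ swap for \emph{every} intermediate $k$, whereas non-separability only promises $(i,k)\in L$ \emph{or} $(k,j)\in L$. So the excursion may demand swaps that simply are not in the list, and you do not say how to reroute around such a $k$. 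Your ``cleaner plan'' of reducing to a small core and padding back up identifies the correct final step (the padding is exactly the extension operation $L'\to L$, as in the $2\Rightarrow 1$ step of \cref{prop:odd-list-feasibility}), but leaves the construction of the realizable core entirely open -- and that is the whole content of the proposition.

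The paper fills this gap with an idea absent from your proposal. Since $L$ is non-separable, the relation $i\le_L j$ defined by ``$i\le j$ and $l_{ij}=0$'' is transitive, hence a partial order on $[n]$. By Hiraguchi's theorem its order dimension $d$ is at most $\lceil n/2\rceil$, so $\le_L$ is the intersection of $d$ linear orders with associated permutations $\pi_1,\dots,\pi_d$. The concatenation $L(\pi_1)+L(\pi_1)+\cdots+L(\pi_d)+L(\pi_d)$ is realized by walking from $\id_n$ to $\pi_k$ and back for each $k$ (each leg is a simple list, feasible by \cref{lem:Lpi1} and odd-even sort); the resulting even list $L'$ has $l'_{ij}=0$ exactly when $l_{ij}=0$ and $l'_{ij}\le 2d\le n\le l_{ij}$ otherwise, which is precisely the inequality your sketch could not obtain. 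Only then does the local back-and-forth padding finish the proof. Without this (or some equivalent) construction of the core, your argument does not go through.
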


\begin{proof}
	We define a binary relation $\le_L$ on the set of wires $[n]$ for each $i,j \in [n]$ as follows.
	We set $i\le_L j$ if and only if $i\le j$ and
	$l_{ij}=0$. Since the list $L$ is non-separable, the relation $\le_L$ is
	transitive, so it is a partial order. The dimension $d$ of a partial order on the set $[n]$ is at
	most $\lceil\frac n2\rceil$~\cite{h-do-SRKU55}, this is,
	there exist $d$ \emph{linear orders} $\le_1,\dots, \le_d$ of the set $[n]$ such that for each $i,j\in
	[n]$ we have $i\le_L j$ if and only if $i\le_t j$ for each $t \in [d]$.
	So $\le_L$ can be seen as the intersection of $\le_1,\dots, \le_d$.
	For each linear order $\le_k$ with $k \in [d]$, let
	$\pi_k$ be the (unique) permutation of the set $[n]$ such that
	$\pi^{-1}_k(1)\le_k\pi^{-1}_k(2)\le_k\dots\le_k \pi^{-1}_k(n)$ and $L_k=L(\pi_k)$.
	As a consequence of \cref{lem:Lpi1}, the list $L_k$ is feasible.
	Let $S_k=L_1+L_1+L_2+L_2+\dots+L_k+L_k$. Observe that the list $S_k$ is even. So
	$\id_n S_k=\id_n$ and $\id_n (S_k-L_k)=\id_n L_k=\id_n L(\pi_k)=\pi_k$.
	Since $\pi_k L_k=\id_n$, we can inductively show for each $k$ that $S_k$ is feasible.
	Therefore, the list $L'=(l'_{ij})=S_d=L_1+L_1+L_2+L_2+\dots+L_d+L_d$ is feasible.
	Let $1\le i<j\le n$. If $l'_{ij}=0$ then for
	all $k\in [d]$ it holds that $\pi_k(i)<\pi_k(j)$, hence, $i\le_k j$, which means $i\le_L j$ and $l_{ij}=0$.
	On the other hand, if
	$l'_{ij}\ne 0$ then $l'_{ij} \le 2d \le n \le l_{ij}$.
	We can extend a tangle~$T'$ realizing $L'$ such that we execute the remaining (even)
	number of $l_{ij} - l'_{ij}$ swaps of the wires $i$ and $j$ for each non-zero entry
	of $L$ after an execution of an $(i,j)$ swap in $T'$.
	Thus, the feasibility of $L$ follows from the feasibility of $L'$.
\end{proof}

In the following we give an example of non-separable lists that
are not feasible.
Note that any triple $A\subseteq [n]$ of an even list is feasible if
and only if it is non-separable (which is not true for general lists,
e.g., the list $L=\{12,23\}$ is not feasible).

\begin{proposition}\label{prop:non-separ-even}
	Not every non-separable even list $L$ is feasible.
\end{proposition}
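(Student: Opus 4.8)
The plan is to exhibit an explicit even list that is non-separable yet infeasible, and to certify infeasibility by a rigidity argument of the same flavour as \cref{obs:unique} and the gadget analysis in the proof of \cref{thm:hardness}. It is convenient to reduce to a canonical form first. For a set~$S$ of unordered pairs, let $L_S$ be the even list with $l_{ij}=2$ for $\{i,j\}\in S$ and $l_{ij}=0$ otherwise; if $L_S$ is infeasible for some non-separable~$S$, then $L_S$ already proves the proposition. Conversely, if $L_S$ is feasible, then so is every even list with support exactly~$S$, by the extension operation from \cref{sec:intro}. Hence it suffices to search among the lists~$L_S$. A preliminary observation pins down where the example must live: such an $S$ must use at least six wires, because the partial order $\le_L$ associated with $L_S$ (whose incomparable pairs $i<j$ are exactly the pairs $\{i,j\}\in S$) has order dimension at most two when $n\le 5$ (the incomparability graph of any poset on at most five elements is a comparability graph, the only obstruction on five vertices being~$C_5$, which is not one); a two-dimensional realizer $\sigma_1,\sigma_2$ then has $L(\sigma_1)$ and $L(\sigma_2)$ with disjoint supports whose union is~$S$, so $L_S=2\bigl(L(\sigma_1)+L(\sigma_2)\bigr)$ is feasible exactly as in the proof of \cref{prop:big-even-L-are-realizable}. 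So the example uses $n\ge 6$ wires, with $\le_L$ three-dimensional, and more precisely with~$S$ chosen so that it cannot be partitioned into inversion sets of permutations (equivalently, every realizer of $\le_L$ reverses some incomparable pair twice) -- this is what defeats the ``concatenate simple tangles'' strategy and is a necessary feature of any counterexample.

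Second, I would verify non-separability of the chosen~$S$ directly: for every triple $i<j<k$ with $\{i,k\}\in S$, check that $\{i,j\}\in S$ or $\{j,k\}\in S$. This is a finite, mechanical verification.

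Third, and this is the heart of the argument, I would prove that $L_S$ is infeasible. Assume a tangle $T=\langle\pi_1,\dots,\pi_h\rangle$ realizes $L_S$. Since $L_S$ is even, $\pi_1=\pi_h=\id_n$; two wires $i<j$ with $\{i,j\}\notin S$ keep their order throughout~$T$, while two wires with $\{i,j\}\in S$ form exactly one closed loop. I would then isolate one distinguished wire~$p$ together with a few ``wall'' wires that play the role of the $\beta_i$-wires in \cref{thm:hardness}, and argue loop by loop that the two swaps of~$p$ with each of its partners are forced into a cyclic order that cannot be completed: one partner-loop of~$p$ would have to lie strictly inside another while a third wire separates their endpoints, which would require~$p$ to perform more than two swaps with that third wire -- impossible, since $l_{ij}\le 2$ everywhere. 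The main obstacle is making this bookkeeping rigorous: unlike in \cref{obs:unique}, no pair of wires is permanently reversed here, so the ``which side of a wall'' accounting must be redone separately inside each loop, and one must rule out every ordering of the handful of relevant swaps rather than leaning on a single reversal.
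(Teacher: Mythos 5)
There is a genuine gap: your argument never actually produces the counterexample. Steps one and two are sound reductions and sanity checks --- restricting to all-twos lists $L_S$ is valid (by the extension operation, feasibility of $L_S$ implies feasibility of every even list with support $S$), and the order-dimension observation correctly shows that any counterexample needs at least six wires, since every poset on at most five elements has dimension at most two and a two-dimensional realizer yields feasibility exactly as in \cref{prop:big-even-L-are-realizable}. But these are only \emph{necessary} conditions locating where an example could live; they do not exhibit one. Your step three, the actual infeasibility proof, is a plan rather than an argument: no concrete $S$ is named, no distinguished wire $p$ or ``wall'' wires are specified, and the loop-by-loop case analysis is not carried out. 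You acknowledge the obstacle yourself --- without a permanently reversed pair of wires, the \cref{obs:unique}-style accounting does not go through --- and that obstacle is precisely why the proof is missing, not a detail to be filled in later.

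For comparison, the paper takes a different and more modest route: it defines an explicit family $L^\star_m$ on $2^m$ wires (two swaps between $i<j$ unless the bitwise OR of $i$ and $j$ equals $j$), proves non-separability by a short combinatorial argument, and then certifies infeasibility of $L^\star_4$ (on $16$ wires) by two independent computer programs. The authors state explicitly that they could not find a combinatorial proof of infeasibility. So the step you defer to ``rigorous bookkeeping'' is, as far as this paper is concerned, an open sub-problem; a complete hand proof along your lines (ideally on as few as six wires) would in fact strengthen the paper, but as written your proposal does not establish the proposition.
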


We construct a family $(L^\star_m)_{m\ge1}$ of non-separable lists
whose entries are all zeros or twos such that $L^\star_m$ has $2^m$
wires and is not feasible for $m \ge 4$.  Slightly deviating from our
standard notation, we number the wires od~$L^\star_m$ from~0 to
$2^m-1$.  There is no swap between two wires $i<j$ in $L^\star_m$ if
each~1 in the binary representation of $i$ also belongs to the
binary representation of $j$, that is,
the bitwise OR of $i$ and $j$ equals $j$; otherwise, there are
two swaps between $i$ and~$j$. E.g., for $m = 4$, wire
$1=0001_2$ swaps twice with wire $2=0010_2$, but doesn't swap with
wire $3=0011_2$.

Each list $L^\star_m$ is clearly non-separable: assume that there
exists a swap between two
wires $i=(i_1 i_2 \dots i_m)_2$ and $k=(k_1 k_2 \dots k_m)_2$ with $k>i+1$.
Then there has to be some index $a$ with $i_a=1$ and $k_a=0$.
Consider any $j=(j_1 j_2 \dots j_m)_2$ with $i<j<k$.
By construction of $L^\star_m$, if $j_a=0$, then there are two swaps
between $i$ and $j$;
if $j_a=1$, then there are two swaps between $j$ and $k$.

We used two completely different computer programs\footnote{%
  One program is based on combining realizations for triplets of
  wires~\cite{github2022}; the other is based on a SAT
  formulation~\cite{a-practical-22}.  Both implementations are
  available on github.}  to verify that $L^\star_4$~-- and hence every
list~$L^\star_m$ with $m \ge 4$~-- is infeasible.  Unfortunately, we could
not find a combinatorial proof showing this.  The list~$L^\star_m$ has
$\frac{1}{2}\sum_{r=1}^{m} 3^{r-1} 2^{m-r}(2^{m-r}-1)$ swaps of
multiplicity~2, so $L^\star_4$ has 55 distinct swaps.  The full list
$L^\star_4$ in matrix form is given below.

\begingroup
\definecolor{lightgray}{rgb}{0.6,0.6,0.6}
\newcommand{\x}{\cellcolor{lightgray}2}
\setcounter{MaxMatrixCols}{16}
\setlength\arraycolsep{6pt}
\[L^\star_4=\begin{pmatrix}
0 & 0 & 0 & 0 & 0 & 0 & 0 & 0 & 0 & 0 & 0 & 0 & 0 & 0 & 0 & 0 \\
0 & 0 & \x& 0 & \x& 0 & \x& 0 & \x& 0 & \x& 0 & \x& 0 & \x& 0 \\
0 & 0 & 0 & 0 & \x& \x& 0 & 0 & \x& \x& 0 & 0 & \x& \x& 0 & 0 \\
0 & 0 & 0 & 0 & \x& \x& \x& 0 & \x& \x& \x& 0 & \x& \x& \x& 0 \\
0 & 0 & 0 & 0 & 0 & 0 & 0 & 0 & \x& \x& \x& \x& 0 & 0 & 0 & 0 \\
0 & 0 & 0 & 0 & 0 & 0 & \x& 0 & \x& \x& \x& \x& \x& 0 & \x& 0 \\
0 & 0 & 0 & 0 & 0 & 0 & 0 & 0 & \x& \x& \x& \x& \x& \x& 0 & 0 \\
0 & 0 & 0 & 0 & 0 & 0 & 0 & 0 & \x& \x& \x& \x& \x& \x& \x& 0 \\
0 & 0 & 0 & 0 & 0 & 0 & 0 & 0 & 0 & 0 & 0 & 0 & 0 & 0 & 0 & 0 \\
0 & 0 & 0 & 0 & 0 & 0 & 0 & 0 & 0 & 0 & \x& 0 & \x& 0 & \x& 0 \\
0 & 0 & 0 & 0 & 0 & 0 & 0 & 0 & 0 & 0 & 0 & 0 & \x& \x& 0 & 0 \\
0 & 0 & 0 & 0 & 0 & 0 & 0 & 0 & 0 & 0 & 0 & 0 & \x& \x& \x& 0 \\
0 & 0 & 0 & 0 & 0 & 0 & 0 & 0 & 0 & 0 & 0 & 0 & 0 & 0 & 0 & 0 \\
0 & 0 & 0 & 0 & 0 & 0 & 0 & 0 & 0 & 0 & 0 & 0 & 0 & 0 & \x& 0 \\
0 & 0 & 0 & 0 & 0 & 0 & 0 & 0 & 0 & 0 & 0 & 0 & 0 & 0 & 0 & 0 \\
0 & 0 & 0 & 0 & 0 & 0 & 0 & 0 & 0 & 0 & 0 & 0 & 0 & 0 & 0 & 0
\end{pmatrix}\]
\endgroup

\section{Open Problems}

Obviously it would be interesting to design faster algorithms for
\tangleMinimization and \listFeasibility.
In particular, for the special case of simple lists, our
exact algorithm running in $\oh(n!\varphi^n)$ time and the algorithm
of Baumann~\cite{b-hmst-BTh20} running in $\oh(n!\psi^n)$ time
(where $\varphi \approx 1.618$ and $\psi \approx 1.325$) are
not satisfactory given that
odd-even sort~\cite{si-fectbs-TCS87} can compute a solution of height at
most one more than the optimum in $\oh(n^2)$ time.
This leads to the question whether
height-minimization is NP-hard for simple lists.
For general lists, one can potentially obtain a faster algorithm for
\listFeasibility by improving the upper bound for entries of
minimal feasible lists
(see Proposition~\ref{prop:minimal-feasible-lists-upper-bound}
for the current upper bound).

Another research direction is to devise approximation
algorithms for \tangleMinimization and \listFeasibility.

\paragraph*{Acknowledgments.}
We thank Thomas C.\ van Dijk for stimulating discussions in the
initial phase of this work.
We thank Stefan Felsner for discussions
about the complexity of \listFeasibility.
We thank the anonymous reviewers of earlier
versions of this paper for helpful comments.

\bibliographystyle{plainurl}
\bibliography{abbrv,tangles}

\end{document}